\title{ \bf Abstract Storage Devices }\author{Robert
K{\"o}nig\thanks{Centre for Quantum Computation, University of
Cambridge, United Kingdom,  E-mail: r.t.koenig@damtp.cam.ac.uk} \and Ueli
Maurer\thanks{Department of Computer Science, ETH Zurich, 8092
Zurich, Switzerland, E-mail: maurer@inf.ethz.ch} \and Stefano
Tessaro\thanks{Department of Computer Science, ETH Zurich, 8092
Zurich, Switzerland, E-mail: tessaros@inf.ethz.ch}}
\date{}
\newcommand{\D}{\mathcal{D}}
\newcommand{\notequiv}{\,/\kern-.6em\hbox{$\equiv$}\,}
\newcommand{\state}[0]{s}
\newcommand{\Stateof}[1]{\mathcal{S}^{#1}}
\newcommand{\func}[0]{g}
\newcommand{\set}[1]{\mathcal{#1}}
\newcommand{\BIGOP}[1]{\mathop{\mathchoice%
{\raise-0.22em\hbox{\huge $#1$}}%
{\raise-0.05em\hbox{\Large $#1$}}{\hbox{\large $#1$}}{#1}}}
\newcommand{\BIGOPP}[1]{\mathop{\mathchoice%
{\raise-0.22em\hbox{\huge $#1$}}%
{\raise-0.05em\hbox{\Large $#1$}}{\hbox{\large $#1$}}{#1}}}
\newcommand{\bigtimes}{\BIGOP{\times}}
\newcommand{\range}[1]{\textup{range}(#1)}
\renewcommand{\ker}[1]{\textup{ker}(#1)}
\newcommand{\refines}[0]{\sqsubseteq}
\newcommand{\srefines}[0]{\sqsubset}
\newcommand{\contained}[0]{\preceq}
\newcommand{\Part}[1]{\Pi\left(#1\right)}
\newcommand{\Partof}[1]{\Pi^{#1}}
\newcommand{\id}{\textit{id}}
\newcommand{\G}{\mathcal{G}}
\newcommand{\V}{\mathcal{V}}
\newcommand{\E}{\mathcal{E}}
\newtheorem{theorem}{Theorem}
\newtheorem{lemma}[theorem]{Lemma}
\newtheorem{proposition}[theorem]{Proposition}
\newtheorem{corollary}[theorem]{Corollary}
\theoremstyle{remark}
\newtheorem{claim}{Claim}
\theoremstyle{definition} 
\newtheorem{definition}{Definition}
\begin{document}
\maketitle

\begin{abstract}
A quantum storage device differs radically from a conventional
physical storage device. Its state can be set to any value in a
certain (infinite) state space, but in general every possible read
operation yields only partial information about the stored state.

The purpose of this paper is to initiate the study of a combinatorial
abstraction, called \emph{abstract storage device} (ASD), which models
deterministic storage devices with the property that only partial
information about the state can be read, but that there is a degree of
freedom as to which partial information should be retrieved. 

This concept leads to a number of interesting problems which we address,
like the reduction of one device to another device, the equivalence of
devices, direct products of devices, as well as the factorization of a
device into primitive devices. We prove that every ASD has an equivalent
ASD with minimal number of states and of possible read operations. Also, we
prove that the reducibility problem for ASD's is $\mathcal{NP}$-complete,
that the equivalence problem is at least as hard as the graph isomorphism
problem, and that the factorization into binary-output devices (if it
exists) is unique.

\vspace*{0.5em}
\noindent {\bf Keywords:} Discrete Structures, Storage Devices,
$\mathcal{NP}$-Completeness, Computational Complexity, Factorizations.
\end{abstract}

\section{Introduction}
\label{sec:introduction}

\subsection{Motivation}
\label{sec:motivation}

The term storage device is conventionally used for a physical device
with a \emph{write} and a \emph{read} operation which can store data
reliably, i.e., with the property that the read operation yields an
exact copy of the data previously written into the device. In this
paper, we consider a generalized type of storage devices for which
the write operation consists of setting the device's state to some
value in the state space, and the subsequent read operation consists
of performing some measurement and provides some (usually only
partial) information about the state.

Such a storage device is a relevant special case of a general physical
system. The state of such a system can in general not be measured
exactly. This may be due to intrinsic reasons. For example, it is
inherently impossible to perfectly measure a quantum
state\footnote{unless it is known to be one of a set of orthogonal
states}.  Also, practical constraints (like the required efficiency)
may impose an unavoidable inaccuracy to the measurement of the
state. For instance, a tape only allows to efficiently retrieve its
content \emph{locally} by sequentially accessing the small portion of
it being of interest.

The task of a conventional storage device (e.g., a hard disc) is to
store information reliably. The design goal of such a system is
therefore to define a finite subset of its state space (as large as
possible) such that the available read operation allows to distinguish
different such states with negligible error probability.  For this
reason, a conventional storage device is characterized by its
\emph{storage capacity}, i.e., the number of bits that can be stored
reliably in it.

Here, we take a more general approach to storage devices, by modeling
explicitly the fact that, on one hand, a read operation provides only
partial information about the state, but that, on the other hand, many
different such read operations can be available.  We typically assume that
only one of these operations can be performed, but that the choice is free.

There are different motivations for considering such a setting.  A
first motivation is \emph{quantum cryptography} or, more precisely,
\emph{privacy amplification}, the last step of a quantum key agreement
protocol (see~\cite{KoMaRe03}). In simplified terms, an adversary is
assumed to have access to a bit string~$S$ of length~$n$, shared by
the legitimate users, and can store information about $S$ in a
$2^k$-dimensional quantum device, where~$k<n$. Since the (reliable)
storage capacity of the device is only~$k$, the adversary cannot store
$S$ perfectly. Later, the legitimate users select a hash function~$h$
from $n$ bits to $t$ bits (where $t<k$) at random from a class of such
functions, and the adversary can now perform a measurement of the
quantum state, {\em depending} on the choice of $h$. In this context,
the goal is to prove that every such measurement yields only a
negligible amount of information about $h(S)$. One can naturally
generalize the setting of privacy amplification to other types of
storage devices.

As an additional motivating example, one can consider the following game:
An entity, say Alice, is given access to an $n$-bit string $s =
[s_1,\ldots, s_n]$ about which she stores partial information. Later, she
will learn a function~$f$ drawn from a given set and will have to guess the
output~$f(s)$. For example, this set of functions might consist of all
linear predicates~$a_1 s_1+ \cdots + a_n s_n \pmod{2}$ for some
$a_1,\ldots, a_n \in \{0,1\}$. A natural question one may ask is finding
the minimal amount of reliable storage required to win this game. More
generally, one may be interested in deciding whether keeping information
about~$s$ in a certain storage device suffices to succeed in the game.
Also, one may even want to compare such games in the sense of determining
whether one game is strictly more difficult than another one. Similar
games, which may be of independent interest, occur in the security analyses
of certain cryptographic schemes.

The purpose of this paper is to initiate the study of a combinatorial
abstraction, called {\em abstract storage device (ASD)}, which models the
described property that only partial information about the state can be
read, but that there is a degree of freedom as to which partial information
should be retrieved.  Both generalized storage devices as well as the above
game can be described as an ASD.  Here we only consider {\em deterministic}
storage devices, i.e., we analyze the case with no error probability. This
is similar in spirit to the investigation of the {\em zero-error
  capacity}~\cite{Sha56} in communication theory. Like there, the
treatments of the zero-error and the negligible-error cases are quite
different and deserve separate investigation.

A natural problem related to the above game is \emph{reducibility} of
devices, which asks for deciding whether a certain device can be
implemented by a second one. Additionally, this concept directly
implies a notion of \emph{equivalence} for devices.

In many branches of science, a common approach to analyze complex
objects is to represent such objects as compositions of simpler and
better-understood ones. From a mathematical point of view, product
factorizations of discrete structures have been studied in many forms
in the past, for instance in the context of graph products and of
finite relational structures (see~\cite{PG00,Jo66} for respective
surveys). Along similar lines, one can introduce \emph{direct
products} of ASD's and study direct product factorizations into
simpler primitive devices.

\subsection{Contributions and Outline of This Paper} 
The main contribution of this paper is the introduction of~abstract
storage devices (ASD). Section~\ref{sec:asd_definition} presents this
abstraction and gives some examples. There, we also define direct
products of ASD's. Moreover, we state the problems of reducibility and
equivalence of ASD's in Section~\ref{sec:reducibility}.

We prove in Section~\ref{sec:minimality} that every ASD has an
equivalent ASD which has both a minimal number of states {\em and} a
minimal number of possible read operations, and we discuss properties
of such devices with respect to reducibility and equivalence.

Also, we present and analyze relevant quantities related to ASD's. The
\emph{storage capacity} provides a measure of the amount of
information that can be reliably stored in a device, while the
\emph{state complexity} characterizes the minimal amount of reliable
storage needed to simulate the device. Finally, the \emph{perfectness
index} of an ASD's is the minimal number of read operations needed to
entirely retrieve the state of a device. These quantities yield
easily-verifiable necessary conditions for reducibility, and
Section~\ref{sec:orderpreserving} is devoted to their discussion.

In Section~\ref{sec:complexity}, we prove the general problem of deciding
reducibility of ASD's to be $\mathcal{NP}$-complete, whereas deciding
equivalence of ASD's is shown to be at least as difficult as deciding the
isomorphism of graphs. Furthermore, the latter problem is unlikely to
be~$\mathcal{NP}$-complete, as its $\mathcal{NP}$-completeness would imply
a collapse of the polynomial hierarchy.

The last section (Section~\ref{sec:products}) addresses the direct
product factorization of ASD's. We prove that every device admits a
unique factorization in terms of binary devices, if such a
factorization exists. This result can be seen as a first step towards
answering the general question of the existence of unique
factorizations into (prime) ASD's, which we state as an open problem.

Relevant basic facts about set partitions and the partition lattice
are briefly reviewed in Section~\ref{sec:preliminaries}.

\section{Preliminaries}
\label{sec:preliminaries}

Throughout this paper, we make use of capital calligraphic letters to
denote sets.
An (undirected) \emph{graph} is an ordered pair $\set{G} = (\set{V},
\set{E})$, where $\set{V}$ is the set of \emph{vertices}, and $\set{E}
\subseteq \binom{\set{V}}{2}$ is the set of \emph{edges}~of~$\set{G}$.

A \emph{(set) partition~$\pi$} of a set~$\set{S}$ is a
family~$\{\set{B}_1, \ldots, \set{B}_{k}\}$ of disjoint subsets of
$\set{S}$, called \emph{blocks}, with the property that~$\bigcup_{i =
1}^k \set{B}_i = \set{S}$. We write $s \equiv_\pi t$ whenever both
elements~$s,t \in \set{S}$ are in the same block of~$\pi$. Moreover,
we denote by~$\Part{\set{S}}$ the set of partitions of~$\set{S}$. We
say that~$\pi \in \Part{\set{S}}$ \emph{refines}~$\pi' \in
\Part{\set{S}}$, denoted $\pi \refines \pi'$, if for all $\set{B} \in
\pi$ there exists a $\set{B}' \in \pi'$ such that $\set{B} \subseteq
\set{B}'$. Recall that $(\Part{\set{S}}; \refines)$ is a bounded
lattice (cf.\ e.g.\ \cite{Gra78}), with the minimal element being
$\id_{\set{S}} = \{\{s\} \,|\, s \in \set{S}\}$ and the maximal
element being $\{\set{S}\}$.  The \emph{meet}~of~$\pi, \pi' \in
\Part{\set{S}}$ is the partition~$\pi \land \pi' = \left\{ \set{B}
\cap \set{B'} \,|\, \set{B} \in \pi, \set{B'} \in \pi', \set{B} \cap
\set{B'} \ne \emptyset \right\}$, whereas their \emph{join}~$\pi \lor
\pi'$ is such that~$x \equiv_{\pi \lor \pi'} y$ if and only if we can
find a sequence of elements $x = x_0, x_1, \ldots, x_r = y$ (for some
$r$) such that $x_i \equiv_{\pi} x_{i + 1}$ or $x_i \equiv_{\pi'} x_{i
+ 1}$ holds for all $i = 0, \ldots, {r-1}$. For a set~$\Pi$ of
partitions, we generally write~$\bigwedge \Pi = \bigwedge_{\pi \in
\Pi} \pi$ and $\bigvee \Pi = \bigvee_{\pi \in \Pi} \pi$. Also, such a
set~$\Pi$ is called an \emph{antichain} if $\pi \not\refines \pi'$ for
all distinct~$\pi, \pi' \in \Pi$.

The \emph{direct product} of the partitions~$\pi \in \Part{\set{S}}$ and
$\pi' \in
\Part{\set{S}'}$ is the partition~$\pi \times \pi' =
\left\{\set{B} \times \set{B}' \,|\, \set{B} \in \pi, \set{B}' \in
\pi' \right\} \in \Part{\set{S} \times \set{S}'}$. In particular, we
have~$(s,s') \equiv_{\pi \times \pi'} (t, t')$ if and only if $s
\equiv_{\pi} s'$ and $t \equiv_{\pi'} t'$ for all $s,t \in \set{S}$,
$s',t' \in \set{S}'$.
Let now $\pi, \rho \in \Part{\set{S}}, \pi', \rho' \in
\Part{\set{S}'}$ be partitions. Then, both equalities~$(\pi \land \rho) \times
(\pi' \land \rho') = (\pi \times \pi') \land (\rho \times \rho')$
and~$(\pi \lor \rho) \times (\pi' \lor \rho') = (\pi \times \pi') \lor
(\rho \times \rho')$ hold. Furthermore,~$\pi \times \pi' \refines \rho
\times \rho'$ is satisfied if and only if $\pi \refines \pi'$ and
$\rho \refines \rho'$. We refer the reader to
Appendix~\ref{app:partitions} for a proof of these facts.

Given sets~$\set{S}, \set{S}'$, a partition $\pi \in \Part{\set{S}'}$, and
some function~$\phi: \set{S} \to \set{S'}$, we define~$\pi \circ \phi \in
\Part{\set{S}}$ as the partition such that $x \equiv_{\pi \circ \phi} y$ if
and only if $\phi(x) \equiv_{\pi} \phi(y)$ for all $x, y \in \set{S}$.
Notice that $(\pi \circ \phi) \land (\pi' \circ \phi) = (\pi \land \pi')
\circ \phi$, and~$(\pi \circ \phi) \lor (\pi' \circ \phi) = (\pi \lor \pi')
\circ \phi$.  Moreover, the~\emph{kernel (partition)} of a function~$f:
\set{X} \to \set{Y}$ is $\ker{f} = \{f^{-1}(\{y\}) \,|\, y \in
\range{f}\}$. Given a further function~$\phi: \set{S} \to \set{X}$, we have
$\ker{f \circ \phi} = \ker{f} \circ \phi$.

Finally, recall that a $k$-variate \emph{lattice polynomial}~$p$ in
the variables~$x_1, \ldots, x_k$ is a formal expression of the form
either~(i) $x_i$~for~$i=1, \ldots, k$, or (ii) one of~$q(x_1, \ldots,
x_k) \land q'(x_1, \ldots, x_k)$ and $q(x_1, \ldots, x_k) \lor q(x_1,
\ldots, x_k)$ for $k$-variate lattice polynomials $q, q'$.  Given
partitions $\pi_1, \ldots, \pi_k$, $\rho_1, \ldots, \rho_k$ such that
$\pi_i \refines \rho_i$ for $i=1, \ldots, k$, then $p(\pi_1, \ldots,
\pi_k) \refines p(\rho_1, \ldots, \rho_k)$ holds for every $k$-variate
lattice polynomial~$p$.

\section{Abstract Storage Devices}
\label{sec:model}

\subsection{Definition}
\label{sec:asd_definition}
In the following, we look at storage devices used by two entities, called
the \emph{writer} and the \emph{reader}, respectively\footnote{These
  entities are not necessarily distinct in a physical sense.}. The writer
writes to such a device by selecting a state~$s$ from the~\emph{state
  space} of the device. The reader subsequently chooses a (possibly
randomized) function~$\func$ mapping states to output symbols from a set of
possible such mappings, and obtains the output~$\func(s)$. Note, however,
that the actual labeling of the outputs is irrelevant, as long as the
reader knows a complete description of the function to be read out.  In
particular, as we only focus on devices whose behavior is entirely
\emph{deterministic}, we abstract from the notion of an output domain and
we solely describe the kernel partitions of the functions of the storage
device.  This allows us to formulate the following combinatorial
abstraction of deterministic devices.

\begin{definition}
  An \emph{abstract storage device (ASD)}~$D$ is a pair~$D =
  \left(\Stateof{D}, \Partof{D} \right)$, where~$\Stateof{D}$ is a set
  called the \emph{state space of $D$}, and $\Partof{D}$ is a family
  of partitions of~$\Stateof{D}$, called the \emph{partition set
  of $D$}.
\end{definition}

For an ASD $D$, a \emph{write operation} of the writer consists in
selecting a state~$\state \in \Stateof{D}$, and in a
subsequent~\emph{read operation} the reader selects a partition~$\pi
\in \Partof{D}$ and learns the (unique) block~$\set{B} \in \pi$ such
that $\state \in \set{B}$. We assume that a single read operation is
performed. Furthermore, in the following, we are going to focus on
ASD's with finite state space and partition set.

Whenever~$\id_{\Stateof{D}} \in \Partof{D}$, the reader can
distinguish any pair of states with a single read operation. In this
case, $D$ is called \emph{perfect}, and it is called
\emph{non-perfect} otherwise. If the partition set contains only the
trivial partition~$\{\Stateof{D}\}$, the ASD is
called~\emph{trivial}. Moreover, it is called~\emph{$r$-regular} if
$|\pi| = r$ for all $\pi \in \Partof{D}$. In particular, $2$-regular
ASD's are also called \emph{binary}.

The following are examples of ASD's.

\paragraph{Perfect device.}  For a given set $\set{X}$,
the ASD~$C_{\set{X}}$ has state space $\set{X}$
and its state can be retrieved perfectly, that is, $\Partof{D} = \{
\id_{\set{X}}\}$.  The special case where $\set{X} = \{1, \ldots,
m\}$ for $m \in \mathbb{N}$ is denoted as $C_m$.

\paragraph{Projective device.}  
For $i \in \{1, \ldots, n\}$, we denote by~$p_i:\{0,1\}^n \to \{0,1\}$ the
function such that~$p_i(x_1, \ldots, x_n) = x_i$ for all $(x_1, \ldots,
x_n) \in \{0,1\}^n$.  The \emph{projective device}~$P_n$ has state space
$\Stateof{P_n} = \{0,1\}^n$ and its partition set is $\Partof{P_n} = \{
\ker{p_i} \,|\, i=1, \ldots, n\}$. This device is similar to the
\emph{$1$-out-of-$n$ oblivious transfer (OT)} primitive considered in
cryptography~(introduced in~\cite{Rab81}). One may also extend this device
to allow for retrieving any~$k < n$ consecutive bits of the state. Such a
device could be used to model a tape-based storage device.

\paragraph{Linear device.} The \emph{linear device}
$L_{n,k}$ where $n \ge k$ is the ASD having state space
$\Stateof{L_{n,k}} = \{0,1\}^n$, and the partition set is the set of
the kernel partitions of all linear maps $\{0,1\}^n \to \{0,1\}^k$.
We denote by $L_n$ the binary ASD~$L_{n,1}$.
\vspace*{1em}

One way of constructing a complex device from simpler devices is the
parallel composition of two ASD's to obtain a new ASD modeling a
setting where the reader and the writer use both devices in a
\emph{non-adaptive} fashion. That is, if $D$ has state $s$ and $D'$
has state $s'$, the reader first selects {\em both} partitions~$\pi
\in \Partof{D}$ and $\pi' \in \Partof{D'}$, and only subsequently
learns the unique blocks~$\set{B} \in \pi$, $\set{B}' \in \pi'$ such
that $s \in \set{B}$ and $s' \in \set{B}'$.

\begin{definition}
The \emph{direct product}~$D \times D'$ of the ASD's $D, D'$ is the
ASD with $\Stateof{D \times D'} = \Stateof{D} \times \Stateof{D'}$ and
$\Partof{D \times D'} = \{ \pi \times \pi' \, |\, \pi \in \Partof{D},
\pi' \in \Partof{D'} \}$.
\end{definition}

For example, since~$\id_{\Stateof{D} \times \Stateof{D'}} = \pi \times
\pi'$ holds if and only if $\pi = \id_{\Stateof{D}}$ and $\pi' =
\id_{\Stateof{D'}}$, we immediately see that~$D \times D'$ is perfect
if and only if both $D$ and $D'$ are perfect.

In general, we may want to look at more than a single read operation.
For an integer~$k \ge 1$ and an ASD~$D$, we denote as~$D^{(k)}$ the
ASD with~$\Stateof{D^{(k)}} = \Stateof{D}$ and~$\Partof{D^{(k)}} =
\Bigl\{\bigwedge_{i = 1}^k \pi_i \,\Big|\, \pi_i \in \Partof{D}, i =
1, \ldots, k \Bigr\}$.  It models the scenario where the reader is
allowed to perform (at most) $k$ non-adaptive read operations, i.e.\
given state~$s \in \Stateof{D}$, it first chooses $k$
partitions~$\pi_1, \ldots, \pi_k \in \Partof{D}$ to be retrieved, and
only subsequently learns the corresponding blocks~$\set{B}_1 \in
\pi_1, \ldots, \set{B}_k \in \pi_k$ such that $s \in \bigcap_{i = 1}^k
\set{B}_i$.

Note that both the direct product and the device~$D^{(k)}$ can be
extended to allow for adaptive read operations, as it essentially
suffices to consider all partitions induced by every possible
(deterministic) retrieval strategy. However, we do not address this
case in this paper.

\subsection{Reducibility and Equivalence}
\label{sec:reducibility}

In the problem of reducibility of ASD's, we want to decide whether an
ASD $D$ can be implemented by a second ASD $D'$. This is formalized by
the following definition.

\begin{definition}
  \label{def:reduction} We say that an ASD~$D$ is \emph{reducible} to
  an ASD~$D'$, denoted $D \le D'$, if there exist functions $\phi:
  \Stateof{D} \to \Stateof{D'}$ and $\alpha: \Partof{D} \to
  \Partof{D'}$ such that~$\alpha(\pi) \circ \phi \refines \pi$ for
  all~$\pi \in \Partof{D}$. Such a pair of functions~$(\phi, \alpha)$
  is called a \emph{reduction of $D$ to $D'$}.
\end{definition}

In order to clarify this concept, consider the following abstraction
in terms of ASD's of the game introduced in
Section~\ref{sec:motivation}. The writer and the reader are given an
ASD~$D'$ as well as the description of a further ASD~$D$. The writer
is told an arbitrary state~$\state \in \Stateof{D}$ and selects the
state~$\phi(\state) \in \Stateof{D'}$ for~$D'$. Later, an arbitrary
partition~$\pi \in \Partof{D}$ is revealed to the reader, and it
performs a read operation for a partition~$\alpha(\pi) \in
\Partof{D'}$. The goal is to find appropriate functions~$\phi:
\Stateof{D} \to \Stateof{D'}$ and~$\alpha: \Partof{D} \to \Partof{D'}$
such the reader can \emph{perfectly} guess the unique block~$\set{B}
\in \pi$ such that $\state \in \set{B}$ from the result of
retrieving~$\alpha(\pi)$ from~$D'$. If such functions exist, the
writer and the reader can simulate~$D$ using $D'$. Note that the
ASD~$D$ itself can alternatively be seen as the specification of a
particular game the writer and the reader try to win by using the
ASD~$D'$.

It is easy to see that the condition~$\alpha(\pi) \circ \phi \refines
\pi$ must hold. Otherwise, there would be~$\state, \state' \in
\Stateof{D}$ such that $\state \notequiv_\pi \state'$, but
$\phi(\state) \equiv_{\alpha(\pi)} \phi(\state')$, and hence~$\state$
and $\state'$ could not be distinguished. Conversely, if~$\alpha(\pi)
\circ \phi \refines \pi$, then given state~$\state \in \Stateof{D}$
and $\set{B}' \in \alpha(\pi)$ such that $\phi(\state) \in \set{B}'$,
there exists a unique block~$\set{B} \in \pi$ such that $\state \in
\set{B}$. Hence, Definition~\ref{def:reduction} expresses the precise
condition in order for $\phi$ and $\alpha$ to be a winning strategy in
the game.

Reducibility is a reflexive and transitive relation. However, it is
not antisymmetric, and thus it is only a \emph{quasi-order} on the set
of ASD's. In this respect, we say that two ASD's~$D, D'$ are
\emph{equivalent}, denoted $D \equiv D'$, if both $D \le D'$ and $D'
\le D$ hold. The relation~$\equiv$ is an equivalence relation and
reducibility implicitly defines a partial order on its equivalence
classes.

The following proposition relates reducibility to direct products and
multiple read operations.

\begin{proposition} 
\label{prop:red_prod_seq}
  Let $D,D',E,E'$ be ASD's.
        \begin{enumerate}[(i)]  \item If $D \le D'$ and $E \le E'$,
        then $D \times E \le D' \times E'$. 
      \item If $D \le D'$, then $D^{(k)} \le
        D'^{(k)}$. \end{enumerate}
\end{proposition}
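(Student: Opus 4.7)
Both parts follow by constructing a reduction out of the given ones and verifying the refinement condition by combining the interaction laws between $\circ \phi$, $\times$, and $\wedge$ established in Section~\ref{sec:preliminaries}.

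For part (i), let $(\phi_D, \alpha_D)$ be a reduction of $D$ to $D'$ and $(\phi_E, \alpha_E)$ a reduction of $E$ to $E'$. Define $\phi: \Stateof{D} \times \Stateof{E} \to \Stateof{D'} \times \Stateof{E'}$ by $\phi(s,t) = (\phi_D(s), \phi_E(t))$, and define $\alpha$ on $\Partof{D \times E}$ by picking, for each element of $\Partof{D \times E}$, a representation $\pi \times \pi'$ with $\pi \in \Partof{D}, \pi' \in \Partof{E}$, and setting $\alpha(\pi \times \pi') = \alpha_D(\pi) \times \alpha_E(\pi') \in \Partof{D' \times E'}$. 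Unfolding the definitions, $(s,t) \equiv_{\alpha(\pi \times \pi') \circ \phi} (s',t')$ iff $\phi_D(s) \equiv_{\alpha_D(\pi)} \phi_D(s')$ and $\phi_E(t) \equiv_{\alpha_E(\pi')} \phi_E(t')$, which means $\alpha(\pi \times \pi') \circ \phi = (\alpha_D(\pi) \circ \phi_D) \times (\alpha_E(\pi') \circ \phi_E)$. Since $\alpha_D(\pi) \circ \phi_D \refines \pi$ and $\alpha_E(\pi') \circ \phi_E \refines \pi'$, the product-refinement characterization from Section~\ref{sec:preliminaries} yields $\alpha(\pi \times \pi') \circ \phi \refines \pi \times \pi'$, as required.

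For part (ii), let $(\phi, \alpha)$ be a reduction of $D$ to $D'$. Since $\Stateof{D^{(k)}} = \Stateof{D}$, take the same state map $\phi$. Every element of $\Partof{D^{(k)}}$ has the form $\bigwedge_{i=1}^k \pi_i$ for some $\pi_1, \ldots, \pi_k \in \Partof{D}$; fix one such representation for each element, and set $\alpha^{(k)}\bigl(\bigwedge_{i=1}^k \pi_i\bigr) = \bigwedge_{i=1}^k \alpha(\pi_i) \in \Partof{D'^{(k)}}$. Using the preliminary identity $(\rho \land \rho') \circ \phi = (\rho \circ \phi) \land (\rho' \circ \phi)$ iteratively, we obtain
\[
\alpha^{(k)}\Bigl(\bigwedge_{i=1}^k \pi_i\Bigr) \circ \phi = \bigwedge_{i=1}^k (\alpha(\pi_i) \circ \phi).
\]
Since each factor satisfies $\alpha(\pi_i) \circ \phi \refines \pi_i$, the monotonicity of lattice polynomials recalled at the end of Section~\ref{sec:preliminaries} (applied to the polynomial $x_1 \wedge \cdots \wedge x_k$) gives $\bigwedge_{i=1}^k (\alpha(\pi_i) \circ \phi) \refines \bigwedge_{i=1}^k \pi_i$, which proves $D^{(k)} \le D'^{(k)}$.

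No step presents a serious obstacle; the only mildly delicate point is that the map $\alpha$ (respectively $\alpha^{(k)}$) must be defined as a function on the partition set even though several $(\pi, \pi')$-pairs (respectively $(\pi_1, \ldots, \pi_k)$-tuples) may yield the same product (respectively meet). Since the partition sets are finite, a choice of representative for each element suffices, and the refinement argument above holds for any such choice.
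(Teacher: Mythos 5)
Your proof is correct and follows essentially the same route as the paper: the paper dismisses part (i) as obvious (your explicit product construction is exactly the intended one) and proves part (ii) with the identical map $\bigwedge_i \pi_i \mapsto \bigwedge_i \alpha(\pi_i)$ and the same distributivity-of-$\circ\phi$-over-$\wedge$ argument. Your remark on choosing representatives to make $\alpha$ well-defined is a small point the paper glosses over, and it is handled correctly.
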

\begin{proof}
  The first claim is obvious. For the second one, let $(\phi, \alpha)$ be a
  reduction of~$D$ to~$D'$.  Define~$\tilde{\alpha}: \Partof{D^{(k)}} \to
  \Partof{D'^{(k)}}$ such that $\tilde{\alpha}(\bigwedge_{i = 1}^k \pi_i) =
  \bigwedge_{i = 1}^k \alpha(\pi_i)$. Then, $(\phi, \tilde{\alpha})$
  reduces $D^{(k)}$ to $D^{(k')}$, since~$\tilde{\alpha}(\bigwedge_{i =
    1}^k \pi_i) \circ \phi = \left( \bigwedge_{i = 1}^k \alpha(\pi_i)
  \right) \circ \phi = \bigwedge_{i = 1}^k (\alpha(\pi_i) \circ \phi)
  \refines \bigwedge_{i = 1}^k \pi_i$. 
\end{proof}

The perhaps most natural question related to storage devices is to
determine how many bits of information can be reliably stored in it
with the guarantee of no errors at read out. This quantity can be
expressed in terms of the largest perfect device that can be reduced
to the considered device.

\begin{definition}
The \emph{storage capacity} of an ASD~$D$ is $C(D) = \max \{ \log m
\,|\, C_m \le D, m \in \mathbb{N} \}$.
\end{definition}

Equivalence of ASD's captures that two ASD's $D$ and $D'$ such that $D
\equiv D'$ have the same behavior. As an example, it is clear that~$D
\times D' \equiv D' \times D$, and that~ $D \times (D' \times D'')
\equiv (D \times D') \times D''$, that is, the direct product is
commutative and associative with respect to equivalence. The direct
product of $D_1, \ldots, D_n$ is thus simply written as~$\bigtimes_{i
= 1}^n D_i$, and $D^k = \bigtimes_{i = 1}^k D$ for any
device~$D$. Finally, notice that $D \times E \equiv D$ holds for any
trivial device~$E$.

\subsection{Minimality}
\label{sec:minimality}

In this section, we have a closer look at the equivalence
relation~$\equiv$ and at the inner structure of its equivalence
classes. In particular, we are interested in the minimal number of
states and partitions needed in order to implement the functionality
of a certain ASD.

\begin{definition}
An ASD~$D$ is \emph{state-minimal} if there is no equivalent device
$D'$ with $|\Stateof{D'}| < |\Stateof{D}|$. Furthermore,~$D$ is
\emph{partition-minimal} if there is no equivalent device~$D'$ with
$|\Partof{D'}| < |\Partof{D}|$. Finally, we say that~$D$ is
\emph{minimal} if $D$ is both state and partition-minimal.
\end{definition}

For every ASD $D$ there exist by definition equivalent ASD's~$D'$ and
$D''$ such that $D'$ is state-minimal and $D''$ is partition
minimal. However, it is not clear whether an equivalent ASD exists
that satisfies both, i.e., which is minimal. This is shown in the
following theorem, which also provides an equivalent characterization
of state and partition-minimality.

\begin{theorem}
\label{thm:minimality}
  For an ASD $D$ we have the following.
  \begin{enumerate}[(i)]
  \item $D$ is state-minimal if and only if for all pairs of distinct
    states $\state, \state' \in \Stateof{D}$ there exists a set
    partition~$\pi \in \Partof{D}$ such that~$\state \notequiv_{\pi}
    \state'$. In particular, this holds if and only if $\bigwedge
    \Partof{D} = \id_{\Stateof{D}}$.
  \item $D$ is partition-minimal if and only if $\Partof{D}$ is an
    antichain (with respect to $\refines$).
\end{enumerate}
Furthermore, for every ASD~$D$, there exists a minimal
ASD~$D' \equiv D$.
\end{theorem}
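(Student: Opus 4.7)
The plan is to prove the three assertions in order and then assemble the minimal equivalent ASD using the constructions underlying (i) and (ii). The easy implications --- the ``if'' direction of~(i) and the ``only if'' direction of~(ii) --- follow from short arguments, whereas the converses are established by explicit constructions that strictly reduce the size of the state or partition set.

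For part~(i), the ``if'' direction is immediate: given any reduction $(\phi,\alpha)$ from $D$ to an equivalent $D'$, the equality $\phi(s)=\phi(s')$ forces $\phi(s)\equiv_{\alpha(\pi)}\phi(s')$, hence $s\equiv_\pi s'$ for every $\pi\in\Partof{D}$ via $\alpha(\pi)\circ\phi\refines\pi$; if $\bigwedge\Partof{D}=\id_{\Stateof{D}}$, this means $s=s'$, so $\phi$ is injective and $|\Stateof{D'}|\ge|\Stateof{D}|$. For the converse, if $\bigwedge\Partof{D}\ne\id_{\Stateof{D}}$, I would construct the quotient $D''$ with $\Stateof{D''}=\Stateof{D}/\mathord\sim$ (where $\sim$ is the equivalence underlying $\bigwedge\Partof{D}$) and $\Partof{D''}=\{\pi/\mathord\sim:\pi\in\Partof{D}\}$; the quotient partitions are well-defined because every $\pi\in\Partof{D}$ is coarser than $\sim$. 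Using the quotient map $q$ and any section of $q$ one checks $D\equiv D''$, while $|\Stateof{D''}|<|\Stateof{D}|$, so $D$ is not state-minimal.

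For part~(ii), the ``only if'' direction is straightforward: if $\pi\refines\pi'$ with $\pi\ne\pi'$ both in $\Partof{D}$, removing $\pi'$ yields an equivalent device with fewer partitions. The harder converse --- antichain implies partition-minimal --- is the main obstacle and will be handled as follows. First, the quotient construction from (i) preserves both the number of partitions (the map $\pi\mapsto\pi/\mathord\sim$ is injective on $\Partof{D}$) and the antichain property (refinement is preserved under the quotient), so I may assume $\bigwedge\Partof{D}=\id_{\Stateof{D}}$. Given any equivalent $D'$, replace it analogously by its state-minimal quotient; part~(i) then forces both $\phi$ and $\phi'$ to be bijections on equal-sized state spaces. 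Setting $\psi=\phi'\circ\phi$ (a permutation of $\Stateof{D}$) and $\beta=\alpha'\circ\alpha:\Partof{D}\to\Partof{D}$, composing the two reductions yields $\beta(\pi)\circ\psi\refines\pi$, and bijectivity of $\psi$ rewrites this as $\beta(\pi)\refines\pi\circ\psi^{-1}$. Iterating gives $\beta^k(\pi)\refines\pi\circ\psi^{-k}$ for every $k\ge 1$; choosing $k$ to be a multiple of the order of the permutation $\psi$ makes $\pi\circ\psi^{-k}=\pi$, and the antichain hypothesis then forces $\beta^k(\pi)=\pi$. Hence $\beta$ is a permutation of $\Partof{D}$, which makes $\alpha$ injective and gives $|\Partof{D'}|\ge|\alpha(\Partof{D})|=|\Partof{D}|$.

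For the final existence claim I combine the two constructions: given $D$, first apply the quotient from~(i) to obtain $D_1\equiv D$ with $\bigwedge\Partof{D_1}=\id_{\Stateof{D_1}}$, then retain only the minimal elements of $(\Partof{D_1},\refines)$ to obtain $D_2$ with $\Partof{D_2}$ an antichain. The equivalence $D_2\equiv D_1$ is witnessed by the identity reduction in one direction and by sending each $\pi\in\Partof{D_1}$ to some $\pi_0\in\Partof{D_2}$ refining it in the other. Finally, since every $\pi\in\Partof{D_1}$ is refined by some $\pi_0\in\Partof{D_2}$, one has $\bigwedge\Partof{D_2}=\bigwedge\Partof{D_1}=\id_{\Stateof{D_2}}$, and by~(i) and~(ii) the device $D_2$ is simultaneously state- and partition-minimal.
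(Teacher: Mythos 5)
Your proof is correct and follows essentially the same route as the paper's: the separation criterion for (i), and for (ii) the key idea of composing the two reductions into a self-reduction $(\psi,\beta)$ of $D$, iterating it until the state permutation becomes the identity, and then invoking the antichain property. The differences are only cosmetic --- you merge all indistinguishable states at once via a quotient rather than one pair at a time, and you conclude injectivity of $\alpha$ directly from $\beta^k = \id_{\Partof{D}}$ instead of deriving a contradiction from a specific non-injective pair as the paper does.
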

\begin{proof}
We prove the two parts of the theorem separately.
\begin{enumerate}[(i)]
\item Assume that $D$ is a state-minimal ASD and that there are
  distinct states $\state_1, \state_2 \in \Stateof{D}$ such that for
  all $\pi \in \Partof{D}$ we have $\state_1 \equiv_{\pi}
  \state_2$. Construct a new ASD~$D'$ as follows. We define
  $\Stateof{D'} := \Stateof{D} - \{\state_2\}$ and $\Partof{D'} :=
  \{\pi \circ \psi \,|\, \pi \in \Partof{D} \}$ where~$\psi:
  \Stateof{D'} \to \Stateof{D}$ is such that $\psi(\state) = \state$.
  Clearly, $D' \le D$. On the other hand, one can easily see that $D
  \le D'$: Define a function $\phi: \Stateof{D} \to \Stateof{D'}$ as
  \begin{displaymath}
    \phi(\state) := \left\{
      \begin{array}{ll}
      \state, & \text{ if } \state \in \Stateof{D'}, \\ \state_1, &
      \text{ if } \state = \state_2,
      \end{array}
      \right.
  \end{displaymath}
  and let $\alpha$ be such that~$\alpha(\pi) = \pi \circ
  \psi$. Then~$(\phi, \alpha)$ is a reduction of $D$ to $D'$
  as~$\alpha(\pi) \circ \phi = \pi \circ (\psi \circ \phi) \refines
  \pi$ because of the choice of $\state_1$ and $\state_2$.
    
  For the converse, assume that for an ASD $D$ we have for every pair
  of distinct states $\state, \state' \in \Stateof{D}$ a partition
  $\pi \in \Partof{D}$ such that $\state \notequiv_{\pi} \state'$.
  Assume now that $D$ is not state-minimal. That is, there is a
  device~$D'$ with $|\Stateof{D'}| < |\Stateof{D}|$ and $D' \equiv D$.
  Let $(\phi, \alpha)$ be a reduction of $D$ to $D'$. There must be
  two states $\state_1, \state_2 \in \Stateof{D}$ such that
  $\phi(\state_1) = \phi(\state_2)$, and hence for all $\pi' \in
  \Partof{D'}$ we have $\phi(\state_1) \equiv_{\pi'}
  \phi(\state_2)$. In particular, let $\pi \in \Partof{D}$ be such
  that $\state_1 \notequiv_{\pi} \state_2$. Then $\pi' \circ \phi
  \not\refines \pi$ for all $\pi' \in \Partof{D'}$, and thus $D \nleq
  D'$.

  It is straightforward to verify that $\bigwedge \Partof{D} =
  \id_{\Stateof{D}}$ holds if and only if for all $\state, \state' \in
  \Stateof{D}$ there exists $\pi \in \Partof{D}$ such that~$\state
  \not\equiv_{\pi} \state'$.

\item Assume that $D$ is a partition-minimal ASD and that $\Partof{D}$
  is not an antichain. That is, there exist distinct~$\pi_1, \pi_2 \in
  \Partof{D}$ such that $\pi_1 \refines \pi_2$. We build a new device
  $D'$ with $\Stateof{D'} := \Stateof{D}$ and $\Partof{D'} :=
  \Partof{D} - \{\pi_2\}$. Clearly, we have $D' \le D$. Furthermore,
  define $\phi: \Stateof{D} \to \Stateof{D'}$ as the identity and
  $\alpha: \Partof{D} \to \Partof{D'}$ such that
  \begin{displaymath}
  \alpha(\pi) := \left\{
      \begin{array}{ll}
      \pi, & \text{ if } \pi \in \Partof{D'}, \\ \pi_1, & \text{ if }
      \pi = \pi_2,
      \end{array}
      \right.
  \end{displaymath}
  for all $\pi \in \Partof{D}$. This implies that $\alpha(\pi) =
  \alpha(\pi) \circ \phi \refines \pi$ for all $\pi \in \Partof{D}$,
  and thus $D \le D'$.  Consequently, $D' \equiv D$.  However,
  $|\Partof{D'}| = |\Partof{D}| - 1$, which contradicts the fact that
  $D$ is partition-minimal.
  
  For the converse, assume that~$\Partof{D}$ is an antichain. Without
  loss of generality let $D$ be state-minimal.  Towards a
  contradiction, additionally assume that $D$ is not partition
  minimal, that is, there is $D'$ such that $D' \equiv D$ and
  $|\Stateof{D}| = |\Stateof{D'}|$ but $|\Partof{D'}| < |\Partof{D}|$.
  In particular, let~$(\phi', \alpha')$ and~$(\phi'', \alpha'')$ be
  reductions of $D$ to $D'$ and of $D'$ to $D$, respectively. Note
  that $|\range{\alpha'}| \le |\Partof{D'}| < |\Partof{D}|$ by our
  assumption. Moreover, let $\phi:= \phi'' \circ \phi'$ and $\alpha :=
  \alpha'' \circ \alpha'$.  Then,~$(\phi, \alpha)$ is a reduction of
  $D$ to itself where the function $\alpha$ is not injective, since
  $|\range{\alpha}| \le |\range{\alpha'}| < |\Partof{D}|$. Moreover,
  as $D$ is state-minimal, $\phi$ is a permutation of
  $\Stateof{D}$. (Otherwise, one would easily be able to build an
  equivalent ASD with fewer states, hence contradicting
  state-minimality.)  Since~$\alpha$ is not injective, there are
  distinct~$\pi_1, \pi_2 \in \Partof{D}$ such that $\alpha(\pi_1) =
  \alpha(\pi_2)$.  Additionally, we have $\alpha(\pi_1) \circ \phi
  \refines \pi_1$ as well as $\alpha(\pi_1) \circ \phi = \alpha(\pi_2)
  \circ \phi \refines \pi_2$, and therefore $\alpha(\pi_1) \circ \phi
  \refines \pi_1 \land \pi_2$.  Also, since $\alpha$ maps partitions
  of $D$ to partitions of $D$, for all integers $k \ge 1$, we have
  \begin{equation} 
    \label{eq:pfmin1} 
    \alpha^k(\pi_1) \circ \phi^k \refines \pi_1 \land
    \pi_2.
  \end{equation}
  Because of our assumption, $\{\pi_1, \pi_2\}$ is an antichain, and
  therefore, $\pi_1 \land \pi_2 \notin \{\pi_1, \pi_2\}$, which
  implies $\pi_1 \land \pi_2 \srefines \pi_1$ and $\pi_1 \land \pi_2
  \srefines \pi_2$. Using this fact, for all integers $k \ge 1$, we
  see that $\alpha^k(\pi_1) \notin \{\pi_1, \pi_2 \}$ since 
  \begin{displaymath}
    |\alpha^k(\pi_1)| = |\alpha^k(\pi_1) \circ \phi^k| \ge
    |\pi_1 \land \pi_2| > \max\{|\pi_1|, |\pi_2|\}.
  \end{displaymath}
  However, there has to exist an integer $k'$ such that $\phi^{k'}$ is
  the identity permutation.  By plugging $k'$ into (\ref{eq:pfmin1})
  we obtain
  \begin{displaymath}
    \alpha^{k'}(\pi_1) \refines \pi_1 \land \pi_2
    \srefines \pi_1,
  \end{displaymath}
  which contradicts the fact that $\Partof{D}$ is an antichain.
\end{enumerate}

Note that by the proofs of (i) and (ii) we see that, given an ASD~$D$,
one can iteratively construct a state-minimal ASD~$D'$ such that $D'
\equiv D$. Furthermore, one can construct out of~$D'$ a
partition-minimal ASD~$D'' \equiv D' \equiv D$ such
that~$|\Stateof{D'}| = |\Stateof{D''}|$. Hence~$D''$ is minimal, and
this concludes the proof of Theorem~\ref{thm:minimality}.
\end{proof}

As an example, observe that the projective device~$P_n$ is state
minimal. Indeed, given distinct~$x, x' \in \{0,1\}^n$, there exists a
component $i$ such that $x_i \ne x'_i$, and thus $x
\notequiv_{\ker{p_i}} x'$. This also implies that the linear
device~$L_n$ is state-minimal. Furthermore, every $r$-regular
device (for some $r$) is necessarily partition-minimal, since any two
partitions with the same number of blocks are either equal or
incomparable (with respect to $\refines$).

The following lemma provides some properties of minimal devices with
respect to device reducibility.

\begin{lemma}
\label{lem:minimality}
\begin{enumerate}[(i)]
\item 
If $D, D'$ are state-minimal and $(\phi, \alpha)$ reduces $D$ to $D'$,
then $\phi$ is injective. In particular, $|\Stateof{D}| \le
|\Stateof{D'}|$.
\item 
If $D, D'$ are both $r$-regular for some $r$ (and hence partition
minimal) and $(\phi, \alpha)$ reduces $D$ to $D'$, then $\alpha$ is
injective. In particular, $|\Partof{D}| \le |\Partof{D'}|$.
\item 
If $D, D'$ are both state-minimal (partition-minimal), then the direct
product~$D \times D'$ is state-minimal (partition-minimal).
\end{enumerate}
\end{lemma}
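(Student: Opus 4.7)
For part (i), I would apply the characterization of state-minimality from Theorem~\ref{thm:minimality}(i) to $D$. Assume towards a contradiction that $\phi(\state_1) = \phi(\state_2)$ for some distinct $\state_1, \state_2 \in \Stateof{D}$. Then for every $\pi \in \Partof{D}$ we have $\phi(\state_1) \equiv_{\alpha(\pi)} \phi(\state_2)$ trivially, so the reduction condition $\alpha(\pi) \circ \phi \refines \pi$ forces $\state_1 \equiv_\pi \state_2$ for every such $\pi$, contradicting state-minimality of $D$. Injectivity of $\phi$ then immediately yields $|\Stateof{D}| \leq |\Stateof{D'}|$. Note that only state-minimality of $D$ is actually used.

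For part (ii), my plan is a block-counting argument. Suppose $\alpha(\pi_1) = \alpha(\pi_2)$ for distinct $\pi_1, \pi_2 \in \Partof{D}$. Since $D$ is $r$-regular, both partitions have exactly $r$ blocks, and being distinct and equi-sized, neither refines the other, so $\pi_1 \land \pi_2$ strictly refines both and thus satisfies $|\pi_1 \land \pi_2| > r$. Combining the reduction condition at $\pi_1$ and $\pi_2$ yields $\alpha(\pi_1) \circ \phi \refines \pi_1 \land \pi_2$, whence $|\alpha(\pi_1) \circ \phi| \geq |\pi_1 \land \pi_2| > r$. On the other hand, the blocks of $\alpha(\pi_1) \circ \phi$ are preimages under $\phi$ of blocks of $\alpha(\pi_1)$, so $|\alpha(\pi_1) \circ \phi| \leq |\alpha(\pi_1)| = r$ by the $r$-regularity of $D'$---a contradiction. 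Injectivity of $\alpha$ then gives $|\Partof{D}| \leq |\Partof{D'}|$. The small combinatorial lemma---that two distinct $r$-block partitions have a meet with strictly more than $r$ blocks---is what I expect to be the conceptual heart of this part.

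For part (iii), I would translate both statements through the equivalent characterizations in Theorem~\ref{thm:minimality} and reduce them to identities already recorded in the preliminaries. For state-minimality, iterated application of $(\pi \land \rho) \times (\pi' \land \rho') = (\pi \times \pi') \land (\rho \times \rho')$ gives
\[
\bigwedge \Partof{D \times D'} \;=\; \Bigl(\bigwedge \Partof{D}\Bigr) \times \Bigl(\bigwedge \Partof{D'}\Bigr) \;=\; \id_{\Stateof{D}} \times \id_{\Stateof{D'}} \;=\; \id_{\Stateof{D \times D'}},
\]
where the middle equality uses Theorem~\ref{thm:minimality}(i) applied to $D$ and $D'$, and the last equality is immediate from the definition of the direct product of partitions. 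For partition-minimality, I would invoke the characterization from the preliminaries that $\pi \times \pi' \refines \rho \times \rho'$ if and only if $\pi \refines \rho$ and $\pi' \refines \rho'$: any comparable pair in $\Partof{D \times D'}$ then forces comparable pairs in $\Partof{D}$ and $\Partof{D'}$, which by the antichain property (Theorem~\ref{thm:minimality}(ii)) must be equalities, so the two product partitions coincide. None of the steps here looks technically difficult once the two characterizations are in hand.
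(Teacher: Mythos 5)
Your proposal is correct and follows essentially the same route as the paper's proof: part (i) via the separating-partition characterization of state-minimality, part (ii) via the block-counting contradiction $|\alpha(\pi_1)\circ\phi| \ge |\pi_1\land\pi_2| > r \ge |\alpha(\pi_1)\circ\phi|$, and part (iii) via the meet/product identity and the componentwise refinement criterion. Your added remarks (that only state-minimality of $D$ is needed in (i), and that the meet of two distinct equi-sized partitions has strictly more blocks) are accurate and match what the paper implicitly uses.
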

\begin{proof} To prove (i), assume that there are indeed $\state_0, \state_1 \in
  \Stateof{D}$ such that $\phi(\state_0) = \phi(\state_1)$, then there
  exists a partition~$\pi \in \Stateof{D}$ such that $s_0 \not\equiv
  s_1$, while for all $\pi' \in \Partof{D'}$ we have $s_0
  \equiv_{\pi' \circ \phi} s_1$ and hence $\pi' \circ \phi
  \not\refines \pi$.
  
  For (ii), assume that $\alpha$ is not an injection, then there exists
  $\pi_0 \ne \pi_1 \in \Partof{D}$ such that $\alpha(\pi_1) =
  \alpha(\pi_2)$. That is $\alpha(\pi_1) \circ \phi \refines \pi_1 \land
  \pi_2$. But then $|\alpha(\pi_1) \circ \phi| \ge |\pi_1 \land \pi_2| >
  r$, since $\Partof{D}$ is an antichain. However, this contradicts the
  fact that~$|\alpha(\pi_1) \circ \phi| \le r$.
  
  Finally, in order to prove (iii), let $D, D'$ be state-minimal. Then
  $\bigwedge \Partof{D \times D'} = \Bigl( \bigwedge \Partof{D}\Bigr)
  \times \Bigl( \bigwedge \Partof{D'} \Bigr) = \id_{\Stateof{D}} \times
  \id_{\Stateof{D'}} = \id_{\Stateof{D \times D'}}$, and thus $D \times D'$
  is state-minimal by Theorem~\ref{thm:minimality}. Furthermore, let $D,
  D'$ be partition-minimal, and assume $D \times D'$ is not. Then there
  exist distinct $\pi \times \pi', \rho \times \rho' \in \Partof{D \times
    D'}$ such that $\pi \times \pi' \refines \rho \times \rho'$. But then
  $\pi \refines \rho$ and $\pi' \refines \rho'$. Since $\pi \ne \rho$ or
  $\pi' \ne \rho'$ holds, at least one of $D$ and $D'$ is not
  partition-minimal.
\end{proof}

It also turns out that equivalence of devices is easier
to characterize in the minimal case. 

\begin{proposition}
  \label{prop:minimal_equivalence}
        Let $D, D'$ be minimal ASD's. Then $D \equiv D'$ if and only
        if there exist bijections~$\phi: \Stateof{D} \to \Stateof{D'}$
        and $\alpha: \Partof{D} \to \Partof{D'}$ such that $\pi =
        \alpha(\pi) \circ \phi$ for all $\pi \in \Partof{D}$, or,
        equivalently, $\pi' = \alpha^{-1}(\pi') \circ \phi^{-1}$ for
        all $\pi' \in \Partof{D'}$.
\end{proposition}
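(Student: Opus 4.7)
I would prove each direction in turn. The reverse implication is immediate: if $\phi$ and $\alpha$ are bijections with $\pi = \alpha(\pi) \circ \phi$ for all $\pi \in \Partof{D}$, then trivially $\alpha(\pi) \circ \phi \refines \pi$, so $(\phi,\alpha)$ reduces $D$ to $D'$. Substituting $\pi = \alpha^{-1}(\pi')$ and composing with $\phi^{-1}$ on the right rearranges this to $\alpha^{-1}(\pi') \circ \phi^{-1} = \pi'$, yielding a reduction $(\phi^{-1},\alpha^{-1})$ of $D'$ to $D$; this also establishes that the two stated forms are interchangeable.

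For the forward implication, fix reductions $(\phi,\alpha)$ of $D$ to $D'$ and $(\phi',\alpha')$ of $D'$ to $D$. By state-minimality and Lemma~\ref{lem:minimality}(i), both $\phi$ and $\phi'$ are injective, and comparing cardinalities in both directions forces $|\Stateof{D}| = |\Stateof{D'}|$, so $\phi,\phi'$ are bijections. To see that $\alpha$ is a bijection, I would pass to the composed self-reduction $(\phi' \circ \phi,\, \alpha' \circ \alpha)$ of $D$ to itself. Since $\phi' \circ \phi$ is a permutation of the finite set $\Stateof{D}$, some power $(\phi' \circ \phi)^{k'}$ equals $\id_{\Stateof{D}}$. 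Iterating the refinement property of a reduction yields $(\alpha' \circ \alpha)^{k'}(\pi) \circ (\phi' \circ \phi)^{k'} \refines \pi$, which collapses to $(\alpha' \circ \alpha)^{k'}(\pi) \refines \pi$. Partition-minimality makes $\Partof{D}$ an antichain (Theorem~\ref{thm:minimality}(ii)), so this refinement is in fact an equality $(\alpha' \circ \alpha)^{k'}(\pi) = \pi$, i.e.\ $\alpha' \circ \alpha$ permutes $\Partof{D}$; symmetrically $\alpha \circ \alpha'$ permutes $\Partof{D'}$, and hence $\alpha$ is both injective and surjective.

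It remains to promote $\alpha(\pi) \circ \phi \refines \pi$ to an equality. Since $\phi$ is a bijection, $|\alpha(\pi) \circ \phi| = |\alpha(\pi)|$, and the refinement gives $|\alpha(\pi)| \ge |\pi|$; symmetrically $|\alpha'(\rho)| \ge |\rho|$ for all $\rho \in \Partof{D'}$. Chaining these inequalities along the $2k'$ alternating applications of $\alpha$ and $\alpha'$ comprising $(\alpha' \circ \alpha)^{k'}$, and using $(\alpha' \circ \alpha)^{k'}(\pi) = \pi$ at the end, forces every intermediate cardinality---in particular $|\alpha(\pi)|$---to equal $|\pi|$. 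A partition refining $\pi$ and having the same number of blocks as $\pi$ must coincide with $\pi$, so $\alpha(\pi) \circ \phi = \pi$, completing the proof. The main technical hurdle is establishing bijectivity of $\alpha$: Lemma~\ref{lem:minimality}(ii) only covers $r$-regular devices, so one has to re-derive it for arbitrary partition-minimal devices via the iterated-composition trick already used in the proof of Theorem~\ref{thm:minimality}(ii), exploiting the finite order of $\phi' \circ \phi$ together with the antichain property.
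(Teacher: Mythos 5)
Your proof is correct and follows essentially the same route as the paper's: both hinge on composing the two reductions into a self-reduction $(\tilde\phi,\tilde\alpha)$ of $D$, raising it to a power $k'$ for which $\tilde\phi^{k'}$ is the identity, and then invoking the antichain property of $\Partof{D}$ from Theorem~\ref{thm:minimality}(ii). The only (harmless) divergences are that you derive bijectivity of $\alpha$ explicitly via this iteration --- where the paper simply appeals to partition-minimality --- and that you finish with a block-counting argument where the paper derives a direct contradiction from a strict refinement between two elements of the antichain.
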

\begin{proof}
  Clearly, if such bijections exist, then $D \equiv D'$. Now, assume
  that $D \equiv D'$, then there exists a reduction $(\phi, \alpha)$
  of $D$ to $D'$. Note that $\phi$ must be a bijection by
  Lemma~\ref{lem:minimality}. Furthermore, $\alpha$ must also be a
  bijection, otherwise there would be an equivalent ASD with fewer
  partitions, contradicting the partition-minimality of~$D$.
  
  Assume towards a contradiction that there is $\pi \in \Partof{D}$
  such that $\alpha(\pi) \circ \phi \srefines \pi$. Note that since
  $D' \le D$, there exists a reduction $(\phi', \alpha')$ of $D'$ to
  $D$ where $\phi'$ and $\alpha'$ are both bijections. Consequently,
  there exists a reduction $(\tilde{\phi}, \tilde{\alpha})$ from $D$
  to itself where $\tilde{\phi} := \phi' \circ \phi$ and
  $\tilde{\alpha} := \alpha' \circ \alpha$ are permutations of
  $\Stateof{D}$ and $\Partof{D}$, respectively.  Moreover, for all $k
  \ge 1$, we have $\tilde{\alpha}^k(\pi) \circ \tilde{\phi}^k \refines
  \tilde{\alpha}(\pi) \circ \tilde{\phi} \refines \alpha(\pi) \circ
  \phi \srefines \pi$. Thus, by choosing $k \ge 1$ such that
  $\tilde{\phi}^k$ is the identity permutation, we obtain a
  contradiction to the partition-minimality of~$D$.
\end{proof}

For example, given ASD's $D, D'$, where $\Partof{D} = \{ \pi_1,
\ldots, \pi_k\}$, as well as a $k$-variate lattice polynomial~$p$,
Proposition~\ref{prop:minimal_equivalence} implies that
$p(\alpha(\pi_1) \circ \phi, \ldots, \alpha(\pi_k) \circ \phi) =
p(\alpha(\pi_1), \ldots, \alpha(\pi_k)) \circ \phi = p(\pi_1, \ldots,
\pi_k)$. As $\phi$ is a bijection, in order to prove that $D
\not\equiv D'$ it is sufficient to find a $k$-variate lattice
polynomial~$p$ such that $|p(\pi_1, \ldots, \pi_k)| \ne
|p(\alpha(\pi_1), \ldots, \alpha(\pi_k)|$.

\subsection{Necessary Conditions for Reducibility}
\label{sec:orderpreserving}

In this section, we discuss easily characterizable necessary conditions for
reducibility.  Let~$\D$ be a set of ASD's and let $f: \D \to \mathbb{R}$ be
a function.  We say that~$f$ is \emph{order-preserving on $\D$} if $D \le
D'$ implies $f(D) \le f(D')$ for all ASD's~$D, D' \in \D$.  In particular,
note that~$f(D) = f(D')$ whenever~$D \equiv D'$.  Such a function yields a
necessary condition for reducibility.  In the following paragraphs, we
discuss three order-preserving functions.

\paragraph{Storage capacity.} 
The storage capacity (cf.\ Section~\ref{sec:reducibility}) is
order-preserving on the set of all ASD's: Given $D,D'$ such that $D
\le D'$, let $m$ be maximal such that $C_m \le D$. By transitivity we
have $C_m \le D'$, and hence~$\log m = C(D) \le C(D')$. The storage
capacity is easy to compute, as stated in the following proposition,
which also provides properties with respect to direct products and
multiple read operations.

\begin{proposition} 
  \label{prop:cap} 
  \begin{enumerate}[(i)]
    \item $C(D) = \max_{\pi \in \Partof{D}} \log |\pi|$ for all ASD's
    $D$.
  \item \label{it:detcap2} $C(D \times D') = C(D) + C(D')$ for all
  ASD's $D, D'$.
  \item \label{it:detcap3} For all $k \geq 1$, we have $C(D^{(k)})
    \leq k \cdot C(D)$ for all ASD's $D$.
  \end{enumerate}
\end{proposition}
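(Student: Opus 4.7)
The plan is to prove all three parts by reducing to a direct computation of $\max_{\pi} |\pi|$.

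For (i), I would unfold the definition of $C_m \le D$: a reduction consists of a map $\phi: \{1, \ldots, m\} \to \Stateof{D}$ together with a choice of a single partition $\pi = \alpha(\id_{\{1,\ldots,m\}}) \in \Partof{D}$ satisfying $\pi \circ \phi \refines \id_{\{1, \ldots, m\}}$. The refinement condition says exactly that distinct elements of $\{1, \ldots, m\}$ are mapped by $\phi$ into distinct blocks of $\pi$, which is possible if and only if $m \le |\pi|$. Hence the largest $m$ with $C_m \le D$ is $\max_{\pi \in \Partof{D}} |\pi|$, giving the claim after taking logarithms.

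For (ii), I would combine (i) with the fact (recorded in the preliminaries) that the partitions of $D \times D'$ are exactly the products $\pi \times \pi'$ with $\pi \in \Partof{D}$, $\pi' \in \Partof{D'}$, and $|\pi \times \pi'| = |\pi| \cdot |\pi'|$. Then
\begin{displaymath}
C(D \times D') = \max_{\pi, \pi'} \log(|\pi|\cdot|\pi'|) = \max_\pi \log|\pi| + \max_{\pi'} \log|\pi'| = C(D) + C(D'),
\end{displaymath}
using that the two maxima can be taken independently.

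For (iii), again using (i), $C(D^{(k)}) = \max_{\pi_1, \ldots, \pi_k \in \Partof{D}} \log \bigl|\bigwedge_{i=1}^k \pi_i\bigr|$. Each block of the meet is a nonempty intersection of one block from each $\pi_i$, so $|\bigwedge_{i=1}^k \pi_i| \le \prod_{i=1}^k |\pi_i|$. Taking logarithms and bounding each $\log|\pi_i| \le C(D)$ gives $C(D^{(k)}) \le k \cdot C(D)$.

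None of these steps appears to be a real obstacle; the only thing to be careful about is justifying the product-of-blocks bound in (iii) and the independent maximization in (ii), both of which follow immediately from the definitions of meet and direct product of partitions reviewed in Section~\ref{sec:preliminaries}.
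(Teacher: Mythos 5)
Your proposal is correct and follows exactly the route the paper intends: the paper proves (i) via the same observation that $C_m \le D$ holds if and only if some $\pi \in \Partof{D}$ has $|\pi| \ge m$, and explicitly omits the proofs of (ii) and (iii) as simple consequences of the definitions of direct product and meet, which is precisely what you supply.
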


The first claim follows from the simple observation that~$C_m \le D$
holds if and only if there exists~$\pi \in \Partof{D}$ such that
$|\pi| \ge m$. The simple proofs of (ii) and (iii) are omitted.

For instance, $C(D) = \log r$ for every $r$-regular
ASD~$D$. Furthermore, the storage capacity allows us to easily see
that~$L_2 \times L_2 \times L_2 \nleq L_3 \times L_3$, since $C(L_2
\times L_2 \times L_2) = 3\cdot C(L_2) = 3 $, but $C(L_3 \times L_3) =
2\cdot C(L_3) = 2$.

\paragraph{State complexity.} 
The \emph{state complexity}~$\sigma(D)$ of an ASD $D$ provides the
minimal number of states that are necessary in order to reproduce the
behavior of $D$, that is, $\sigma(D) = \min_{E \equiv D} \log
|\Stateof{E}|$. The state complexity is order-preserving: Given
devices~$D, D'$, let $E,E'$ be state-minimal such that~$D \equiv E$
and $D' \equiv E'$. Since $D \le D'$, we have $E \le E'$ by
transitivity, and by Lemma~\ref{lem:minimality} this implies
$\sigma(D) = \log |\Stateof{E}| \le \log |\Stateof{E'}| =
\sigma(D')$. Furthermore, $\sigma(D \times D') = \sigma(D) +
\sigma(D')$ by Lemma~\ref{lem:minimality}.

Note that $D \le C_{2^{\sigma(D)}}$, whereas
Lemma~\ref{lem:minimality} yields $D \nleq C_{m'}$ for all $m' <
2^{\sigma(D)}$. For this reason, we obtain~$\sigma(D) = \min \{ \log m
\,|\, m \in \mathbb{N}, D \le C_m\}$. Therefore, the state
complexity~$\sigma(D)$ provides the minimal amount of reliable storage
in terms of bits needed to win the game (in the sense of
Section~\ref{sec:reducibility}) described by the ASD~$D$.

\paragraph{Perfectness index.} 
The \emph{perfectness index}~$i(D)$ of a device~$D$ is the minimal
integer~$k$ such that~$D^{(k)}$ is perfect, if such $k$
exists. Otherwise, $i(D) = \infty$. Thus, $i(D)$ provides the minimal
number of read operations needed to retrieve the state
perfectly. If~$i(D)$ is finite, then in particular~$i(D) \le
|\Partof{D}|$, and by Theorem~\ref{thm:minimality} $i(D)$ is bounded
if and only if $D$ is state-minimal. In the following, for an
integer~$m$, consider the set of ASD's~$\D_m$ such that for all $D \in
\D_m$ we have $|\Stateof{D}| = m$.

\begin{proposition}~\label{prop:imperf} 
  Let $D, D' \in \D_m$ for some $m$ be such that $D \le D'$.  Then,
  $i(D) \ge i(D')$. That is, $D \mapsto -i(D)$ is an order-preserving
  function on~$\D_m$.
\end{proposition}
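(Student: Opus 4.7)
My plan is to exploit Proposition~\ref{prop:red_prod_seq}(ii) in combination with the characterization of state-minimality from Theorem~\ref{thm:minimality}(i). First I would dispose of the trivial case: if $D$ is not state-minimal, then $\bigwedge \Partof{D} \ne \id_{\Stateof{D}}$, so no meet of partitions in $\Partof{D^{(k)}}$ equals the identity partition; hence $i(D) = \infty$ and the conclusion $i(D) \ge i(D')$ holds automatically. So we may assume $D$ is state-minimal, and must show $i(D') \le i(D)$.

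Next I would use the hypothesis $|\Stateof{D}| = |\Stateof{D'}| = m$ to upgrade the reduction map $\phi$ to a bijection. Let $(\phi,\alpha)$ be a reduction of $D$ to $D'$. If $\phi(s) = \phi(t)$ for some $s \neq t$ in $\Stateof{D}$, then $\phi(s) \equiv_{\alpha(\pi)} \phi(t)$ holds trivially for every $\pi \in \Partof{D}$, i.e.\ $s \equiv_{\alpha(\pi)\circ\phi} t$, which together with $\alpha(\pi)\circ\phi \refines \pi$ forces $s \equiv_\pi t$ for all $\pi \in \Partof{D}$. This contradicts state-minimality of $D$ via Theorem~\ref{thm:minimality}(i). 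Hence $\phi$ is injective, and since $|\Stateof{D}| = |\Stateof{D'}|$, it is a bijection.

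The core step is then straightforward: let $k = i(D)$ and pick $\pi_1, \ldots, \pi_k \in \Partof{D}$ such that $\bigwedge_{i=1}^k \pi_i = \id_{\Stateof{D}}$. Setting $\pi_i' := \alpha(\pi_i) \in \Partof{D'}$, I would apply $\alpha(\pi_i) \circ \phi \refines \pi_i$ together with the identity $(\bigwedge_i \rho_i) \circ \phi = \bigwedge_i (\rho_i \circ \phi)$ from the preliminaries to obtain
\begin{displaymath}
\Bigl(\bigwedge_{i=1}^k \pi_i'\Bigr) \circ \phi \;=\; \bigwedge_{i=1}^k \bigl(\alpha(\pi_i)\circ \phi\bigr) \;\refines\; \bigwedge_{i=1}^k \pi_i \;=\; \id_{\Stateof{D}},
\end{displaymath}
so this meet equals $\id_{\Stateof{D}}$. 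Since $\phi$ is a bijection, two distinct elements $\phi(s), \phi(t) \in \Stateof{D'}$ cannot lie in a common block of $\bigwedge_i \pi_i'$, giving $\bigwedge_{i=1}^k \pi_i' = \id_{\Stateof{D'}}$. Thus $D'^{(k)}$ is perfect and $i(D') \le k = i(D)$.

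The only subtle point — and hence the main place one must be careful — is the passage from $(\bigwedge_i \pi_i') \circ \phi = \id_{\Stateof{D}}$ to $\bigwedge_i \pi_i' = \id_{\Stateof{D'}}$, which genuinely requires $\phi$ to be a bijection rather than merely injective. This is why the equal-cardinality assumption $D,D' \in \D_m$ is used, and why one must first argue injectivity of $\phi$ from state-minimality of $D$ before invoking it.
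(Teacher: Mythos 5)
Your proof is correct and follows essentially the same route as the paper's: both arguments come down to the observation that, because $|\Stateof{D}|=|\Stateof{D'}|=m$, a partition $\pi'\in\Partof{D'^{(k)}}$ satisfying $\pi'\circ\phi\refines\id_{\Stateof{D}}$ (with $k=i(D)$) must itself equal $\id_{\Stateof{D'}}$. The paper packages this as a contradiction via Proposition~\ref{prop:red_prod_seq}(ii) instead of exhibiting the witnessing partitions directly, but the mathematical content is identical.
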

\begin{proof}
  If $i(D) = \infty$ holds, the claim is trivially
  satisfied. Therefore, assume that $i(D)$ is finite, and, towards a
  contradiction, that $D \le D'$, but $i(D) < i(D')$. There is an
  integer $k \ge 1$ such that $D^{(k)}$ is perfect, but $D'^{(k)}$ is
  not. Thus, $\id_{\Stateof{D}} \in \Partof{D^{(k)}}$, but
  $\id_{\Stateof{D'}} \notin \Partof{D'^{(k)}}$. Since $|\Stateof{D}|
  = |\Stateof{D'}| = m$, for all possible~$\phi: \Stateof{D} \to
  \Stateof{D'}$ there is no partition~$\pi' \in \Partof{D'^{(k)}}$
  such that $\pi' \circ \phi \refines
  \id_{\Stateof{D}}$. Hence~$D^{(k)} \nleq D'^{(k)}$, which
  contradicts~$D \le D'$ according to
  Proposition~\ref{prop:red_prod_seq}. 
\end{proof}

One can easily verify that $i(\bigtimes_{i = 1}^n D_i) = \max_{1 \le i
\le n} i(D_i)$ for any ASD's $D_1, \ldots, D_n$. Furthermore, $i(L_n) =
n$, since exactly~$n$ distinct, linearly independent, linear
predicates have to be read out to learn the state. 
As an example, consider the ASD's~$L_4 \times L_2$ and $L_3
\times L_3$. By the above, we have~$i(L_4 \times L_2) = 4$, and $i(L_3
\times L_3) = 3$.  Therefore, $L_3 \times L_3 \nleq L_4 \times L_2$ by
Proposition~\ref{prop:imperf}.

\vspace*{1em} The presented quantities are related by the following
proposition.

\begin{proposition}
For all ASD's~$D$, we have $\sigma(D) \le i(D) \cdot C(D)$.
\end{proposition}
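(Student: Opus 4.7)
The plan is to prove the bound $\log|\Stateof{D}| \le i(D) \cdot C(D)$ directly, and then use the trivial inequality $\sigma(D) \le \log|\Stateof{D}|$ (which follows from $D \equiv D$) to conclude. This reduces the proposition to an elementary counting argument on meets of partitions, without any need to pass to an equivalent minimal device.

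First, handle the trivial case: if $i(D) = \infty$, the right-hand side is $\infty$ (or $0$ when $C(D) = 0$, in which case $D$ is trivial and $\sigma(D) = 0$), so there is nothing to prove. Assume therefore $i(D) = k < \infty$. By the definition of the perfectness index, $D^{(k)}$ is perfect, which unfolds to the existence of partitions $\pi_1, \ldots, \pi_k \in \Partof{D}$ such that $\bigwedge_{i=1}^k \pi_i = \id_{\Stateof{D}}$.

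The main step is to observe that any block of $\bigwedge_{i=1}^k \pi_i$ is, by definition of the meet, an intersection $\set{B}_1 \cap \cdots \cap \set{B}_k$ with one $\set{B}_i$ drawn from each $\pi_i$, so $|\bigwedge_{i=1}^k \pi_i| \le \prod_{i=1}^k |\pi_i|$. Combined with the characterization $|\pi_i| \le 2^{C(D)}$ from Proposition~\ref{prop:cap}(i), this yields $|\Stateof{D}| = |\bigwedge_{i=1}^k \pi_i| \le 2^{k \cdot C(D)}$, hence $\log |\Stateof{D}| \le k \cdot C(D) = i(D) \cdot C(D)$. Since $\sigma(D) = \min_{E \equiv D} \log|\Stateof{E}| \le \log|\Stateof{D}|$, the proposition follows.

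I do not expect any obstacle: the inequality $|\bigwedge_i \pi_i| \le \prod_i |\pi_i|$ is immediate from the formula for the meet given in Section~\ref{sec:preliminaries}, and the rest is just chaining definitions. The only mild subtlety is notational bookkeeping around the $i(D) = \infty$ case, which is entirely routine.
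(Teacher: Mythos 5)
Your proof is correct and follows essentially the same route as the paper's: the paper combines $\sigma(D) \le C(D^{(i(D))}) = \log|\Stateof{D}|$ with Proposition~\ref{prop:cap}(iii), whereas you simply inline the (omitted) counting argument behind that proposition, namely $\bigl|\bigwedge_{i=1}^k \pi_i\bigr| \le \prod_{i=1}^k |\pi_i| \le 2^{k\cdot C(D)}$. No gaps; the handling of the $i(D)=\infty$ case is also fine.
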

\begin{proof}
The claim is trivially true if $i(D) = \infty$. Otherwise, we just
combine the facts that~$\sigma(D) \le C(D^{(i(D))}) =
\log|\Stateof{D}|$ and that~$C(D^{(i(D))}) \le i(D) \cdot C(D)$. 
\end{proof}

\section{Complexity of Reducibility and Equivalence}
\label{sec:complexity}

We investigate the computational complexity of deciding reducibility
and equivalence of ASD's. Both problems are obviously in
$\mathcal{NP}$, since given a reduction~$(\phi, \alpha)$ reducibility
can be verified in polynomial-time (in the numbers of states and
partitions)\footnote{We assume some canonical encoding of ASD's.}, and
hence also equivalence (by giving two corresponding reductions). In
this section, we prove the following theorem.

\begin{theorem}
  \label{thm:npcompleteness} Reducibility of ASD's is
  $\mathcal{NP}$-complete. Furthermore, deciding equivalence of ASD's
  is at least as hard as deciding graph isomorphism.
\end{theorem}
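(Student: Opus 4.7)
The two claims are proved by independent polynomial-time reductions. Both problems lie in $\mathcal{NP}$ because a reduction $(\phi, \alpha)$ is a polynomial-size witness and the condition $\alpha(\pi) \circ \phi \refines \pi$ can be verified in polynomial time for each of the polynomially many $\pi \in \Partof{D}$.

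For the $\mathcal{NP}$-hardness of reducibility, the plan is to reduce from CLIQUE restricted to $k \ge 4$ (still $\mathcal{NP}$-complete, since instances with $k \le 3$ are polynomial-time decidable). Given $(G, k)$ with $G = (\set{V}, \set{E})$ and $|\set{V}| \ge k$, I would construct
\begin{align*}
  D &= \bigl(\{1, \ldots, k\},\, \{\pi_{ij} : 1 \le i < j \le k\}\bigr), \\
  D' &= \bigl(\set{V},\, \{\rho_e : e \in \set{E}\}\bigr),
\end{align*}
with $\pi_{ij} = \{\{i\}, \{j\}, \{1, \ldots, k\} \setminus \{i, j\}\}$ and $\rho_{\{u, v\}} = \{\{u\}, \{v\}, \set{V} \setminus \{u, v\}\}$, and prove $D \le D'$ iff $G$ has a $k$-clique. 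The ``if'' direction is immediate from $\phi(i) = w_i$, $\alpha(\pi_{ij}) = \rho_{\{w_i, w_j\}}$ for any clique $\{w_1, \ldots, w_k\}$; direct computation gives $\rho_{\{w_i, w_j\}} \circ \phi = \pi_{ij}$. For the converse, given a reduction $(\phi, \alpha)$, $\phi$ must be injective, for otherwise $\phi(a) = \phi(b)$ with $a \ne b$ would place $a$ and $b$ in a common block of every $\rho_e \circ \phi$, contradicting the refinement of $\pi_{ab}$. Writing then $\alpha(\pi_{ij}) = \rho_{\{u, v\}}$, the three blocks $\phi^{-1}(u)$, $\phi^{-1}(v)$, $\phi^{-1}(\set{V} \setminus \{u, v\})$ must each sit inside one of $\{i\}$, $\{j\}$, $\{1, \ldots, k\} \setminus \{i, j\}$. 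Injectivity makes the first two singletons or empty; since $\{1, \ldots, k\} \setminus \{i, j\}$ has size $k - 2 \ge 2$ while the singletons $\{i\}, \{j\}$ absorb at most one element each, a size count forces both $\phi^{-1}(u)$ and $\phi^{-1}(v)$ to be non-empty with $\{\phi^{-1}(u), \phi^{-1}(v)\} = \{\{i\}, \{j\}\}$. Hence $\{\phi(i), \phi(j)\} \in \set{E}$ for every pair $i < j$, so $\phi(\{1, \ldots, k\})$ is a $k$-clique.

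For graph-isomorphism hardness of equivalence, given graphs $G_1, G_2$ on the same vertex-set size $n$ (padding with isolated vertices if needed; isomorphism on fewer than four vertices is trivial, so assume $n \ge 4$), associate to each $G_i = (\set{V}_i, \set{E}_i)$ the ASD
\[
D_{G_i} = \bigl(\set{V}_i,\, \{\pi_v : v \in \set{V}_i\} \cup \{\pi_e : e \in \set{E}_i\}\bigr)
\]
with $\pi_v = \{\{v\}, \set{V}_i \setminus \{v\}\}$ and $\pi_e = \{e, \set{V}_i \setminus e\}$. One verifies $D_{G_i}$ is minimal: the $\pi_v$'s already separate every pair of distinct states (giving $\bigwedge \Partof{D_{G_i}} = \id_{\set{V}_i}$ and hence state-minimality via Theorem~\ref{thm:minimality}(i)), and all partitions are $2$-block with distinct block-size multisets---$\{1, n-1\}$ for the $\pi_v$'s versus $\{2, n-2\}$ for the $\pi_e$'s, which differ for $n \ge 4$---so they form an antichain (partition-minimality). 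A graph isomorphism $\phi$ lifts to $D_{G_1} \equiv D_{G_2}$ via $\alpha(\pi_v) = \pi_{\phi(v)}$, $\alpha(\pi_e) = \pi_{\phi(e)}$. Conversely, Proposition~\ref{prop:minimal_equivalence} supplies bijections $(\phi, \alpha)$ with $\pi = \alpha(\pi) \circ \phi$; preservation of block sizes by the bijection $\phi$ forces $\alpha$ to send $\pi_v$-type to $\pi_{v'}$-type. Solving $\pi_v = \pi_{v'} \circ \phi$ then gives $v' = \phi(v)$, and solving $\pi_e = \pi_{e'} \circ \phi$ gives $e' = \phi(e)$, so $\phi(e) \in \set{E}_2$ for every $e \in \set{E}_1$. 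Together with $|\set{E}_1| = |\set{E}_2|$ from the bijectivity of $\alpha$, this makes $\phi$ a graph isomorphism.

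The most delicate step will be the converse in the CLIQUE reduction, where one must rule out every distribution of the three blocks of $\rho_{\{u, v\}} \circ \phi$ over $\{\{i\}, \{j\}, \{1, \ldots, k\} \setminus \{i, j\}\}$ except the intended bijective matching; careful preimage-size bookkeeping, using injectivity of $\phi$ and $k \ge 4$, is what enforces this rigidity.
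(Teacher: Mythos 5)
Your NP-completeness argument for reducibility is essentially the paper's: you build the same ``graph device'' (state space $=$ vertices, one $3$-block partition $\{\{u\},\{v\},\set{V}\setminus\{u,v\}\}$ per edge) for $K_k$ and for $G$, and your converse — injectivity of $\phi$ from separation of all state pairs in the source device, then a preimage-size count forcing $\alpha(\pi_{ij})\circ\phi$ to match $\pi_{ij}$ only via $\{\phi(i),\phi(j)\}\in\set{E}$ — is the same rigidity argument the paper packages as Lemma~\ref{lem:red2} (via Lemmas~\ref{lem:graphmin} and~\ref{lem:minimality}). One small difference in your favor: you only need state-minimality of the \emph{source} device $D(\set{K}_k)$, so you need not restrict $G$ to have no isolated vertices as the paper does.

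For equivalence you genuinely diverge. The paper reuses the same $3$-block graph device and gets GI-hardness for free from Lemma~\ref{lem:red2}, since $\G\preceq\G'$ and $\G'\preceq\G$ force $\G\cong\G'$ for finite graphs. You instead encode a graph by $2$-block partitions $\{\{v\},\set{V}\setminus\{v\}\}$ and $\{e,\set{V}\setminus e\}$ and invoke Proposition~\ref{prop:minimal_equivalence}; the vertex partitions buy you state-minimality for arbitrary graphs, and block-size counting separates vertex-type from edge-type partitions. This works, but only for $n\ge 5$: at $n=4$ the map $e\mapsto\{e,\set{V}\setminus e\}$ is not injective, because complementary edges yield the \emph{same} partition (e.g.\ $e=\{1,2\}$ and $e'=\{3,4\}$ both give $\{\{1,2\},\{3,4\}\}$). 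Concretely, on $\set{V}=\{1,2,3,4\}$ the graph with edge set $\{\{1,2\},\{3,4\}\}$ and the graph with edge set $\{\{1,2\}\}$ produce \emph{identical} ASD's under your construction, yet are not isomorphic; so your step ``solving $\pi_e=\pi_{e'}\circ\phi$ gives $e'=\phi(e)$'' fails there (the solution could be $\phi(e)=\set{V}'\setminus e'$). The fix is trivial — pad both graphs to $n\ge 5$ vertices, where $|e|=2\ne n-2$ makes the edge recoverable from its partition — but as written your ``assume $n\ge 4$'' is not enough. With that adjustment your reduction is correct, and it is a legitimately different gadget from the paper's.
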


First, we briefly recall some graph-theoretic notions. A graph~$\G =
(\V, \E)$ is \emph{isomorphic} to $\G' = (\V', \E')$, denoted~$\G \cong
\G'$, if there exists a bijection~$\phi: \V \to \V'$ such that $\{v,
w\} \in \E$ if and only if $\{\phi(v), \phi(w) \} \in \E'$.
Furthermore, $\G$ is a \emph{subgraph} of $\G'$ if $\V \subseteq \V'$
and $\E \subseteq \E'$.  Finally, $\G$ is \emph{contained} in $\G'$,
denoted~$\G \preceq \G'$, if there exists a subgraph~$\set{H}$ of $\G$
such that $\G \cong \set{H}$. Let~$\set{K}_k$ be the complete graph on
$k$ vertices. The \emph{$k$-clique problem} consists in deciding,
given a graph~$\G$, whether $\set{K}_k \preceq \G$. For arbitrary~$k$,
this is a well-known $\mathcal{NP}$-complete problem. 

In order to prove Theorem~\ref{thm:npcompleteness}, we introduce a class of
ASD's representing graphs.  For a given graph~$\set{G} = (\V, \E)$, we
define its \emph{graph device}~$D(\G)$ as the $3$-regular ASD such
that~$\Stateof{D(\G)} = \V$ and $\Partof{D(\G)} = \{ \pi_e \,|\, e \in
\E\}$, where for $e = \{u, v\} \in \E$, we have $\pi_e = \bigl\{\{u\},
\{v\}, V - \{u, v\} \bigr\}$. Note that graph devices are only meaningful
if $|\V| \ge 4$, since in the case where $|\V| = 3$, all edges define the
same partition.

For instance, if one takes the complete graph~$\set{K}_k$ (for $k \ge 4$),
the resulting graph device~$D(\set{K}_k)$ has state space~$\{1, \ldots,
k\}$ and all its partitions are of the form $\{\{i\}, \{j\}, \{1, \ldots,
k\} - \{i,j\}\}$ for all~$i < j$, $i, j \in \{1, \ldots, k\}$.

The following result can easily be verified
using~Theorem~\ref{thm:minimality}.

\begin{lemma}
  \label{lem:graphmin}
  The ASD~$D(\G)$ is minimal for all graphs $\G = (\V, \E)$ with $|\V|
  \ge 4$ and no isolated\footnote{A vertex~$v \in \V$ is
  \emph{isolated} if there exists no $e \in \E$ such that $v \in e$.}
  vertices.
\end{lemma}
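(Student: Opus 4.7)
The plan is to invoke Theorem~\ref{thm:minimality} and verify both state-minimality and partition-minimality of $D(\G)$ separately.

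For partition-minimality, I would simply observe that $D(\G)$ is $3$-regular by construction: every partition $\pi_e$ has exactly three blocks (two singletons and the complement). As already noted in the text immediately after Theorem~\ref{thm:minimality}, every $r$-regular ASD is partition-minimal, because two partitions with the same number of blocks comparable under $\refines$ must coincide. Hence $\Partof{D(\G)}$ is an antichain.

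For state-minimality, by part~(i) of Theorem~\ref{thm:minimality} it suffices to show that for every pair of distinct vertices $u, v \in \V$ there exists an edge $e \in \E$ with $u \notequiv_{\pi_e} v$. I would proceed by case analysis on whether $\{u,v\}$ is an edge. If $\{u,v\} \in \E$, then the partition $\pi_{\{u,v\}} = \{\{u\}, \{v\}, \V - \{u,v\}\}$ separates $u$ and $v$ into two distinct singleton blocks. Otherwise, the no-isolated-vertex assumption supplies some edge $e = \{u, w\} \in \E$ incident to $u$; note $w \ne v$ (otherwise $\{u,v\}$ would be an edge, contradicting the case). Then in $\pi_e = \{\{u\}, \{w\}, \V - \{u, w\}\}$ the element $u$ lies in its own block while $v$ lies in $\V - \{u, w\}$ (which is nonempty since $|\V| \ge 4$), so again $u \notequiv_{\pi_e} v$.

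The argument is essentially routine once one matches the definition of $\pi_e$ against the characterizations in Theorem~\ref{thm:minimality}; the only subtle point is to use the no-isolated-vertex hypothesis exactly where non-adjacent pairs need to be separated, and to notice that the hypothesis $|\V| \ge 4$ (beyond $|\V|\geq 3$) is what prevents the degenerate case described in the remark before the lemma, where all edges would induce the same partition. I do not anticipate any significant obstacle.
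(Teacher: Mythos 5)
Your proof is correct and follows exactly the route the paper intends: the paper gives no explicit proof, merely asserting that the lemma ``can easily be verified using Theorem~\ref{thm:minimality}'', and your verification (the antichain/$3$-regularity observation for partition-minimality, and the two-case separation argument using the no-isolated-vertex hypothesis for state-minimality) is precisely that verification. No gaps.
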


The following lemma is the central point in the proof of
Theorem~\ref{thm:npcompleteness}.

\begin{lemma} 
\label{lem:red2}
  Let $\G = (\V,\E)$ and $\G' = (\V', \E')$ be graphs with no isolated
  vertices such that $\min \{|\V|, |\V'|\} \ge 4$. Then, $\G \preceq
  \G'$ if and only if $D(\G) \le D(\G')$.
\end{lemma}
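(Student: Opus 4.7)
The plan is to prove both directions by exploiting the rigid structure of graph devices, using Lemma~\ref{lem:graphmin} together with Lemma~\ref{lem:minimality} to force $\phi$ to be injective and to behave as a graph embedding.

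For the forward direction ($\G \preceq \G' \Rightarrow D(\G) \le D(\G')$), I would fix a subgraph $\set{H}$ of $\G'$ together with an isomorphism $\phi : \V \to \V(\set{H}) \subseteq \V'$. Since $\phi$ carries every edge $e = \{u,v\} \in \E$ to an edge $\{\phi(u),\phi(v)\} \in \E'$, the map $\alpha(\pi_e) := \pi_{\{\phi(u),\phi(v)\}}$ is a well-defined function $\Partof{D(\G)} \to \Partof{D(\G')}$. A direct unwinding of the definition, using the injectivity of $\phi$ on $\V$, yields $\alpha(\pi_e) \circ \phi = \pi_e$ for every $e \in \E$, so $(\phi,\alpha)$ is a reduction.

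The reverse direction is the substantial step. Given a reduction $(\phi,\alpha)$ of $D(\G)$ to $D(\G')$, I would first apply Lemma~\ref{lem:graphmin} to conclude that both devices are minimal, and then Lemma~\ref{lem:minimality}(i) to conclude that $\phi$ is injective. Fix an edge $e = \{u,v\} \in \E$ and write $\alpha(\pi_e) = \pi_{e'}$ for some $e' = \{u',v'\} \in \E'$. The key claim is $\{\phi(u),\phi(v)\} = \{u',v'\}$. Because $\{u\}$ is a singleton block of $\pi_e$ and $\alpha(\pi_e) \circ \phi \refines \pi_e$, the block of $\alpha(\pi_e) \circ \phi$ containing $u$ must equal $\{u\}$; but this block is exactly $\phi^{-1}(B)$, where $B \in \pi_{e'}$ is the block containing $\phi(u)$. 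If $\phi(u) \notin \{u',v'\}$, then $B = \V' \setminus \{u',v'\}$, and the injectivity of $\phi$ together with $|\V| \ge 4$ guarantees that at least $|\V| - 2 \ge 2$ vertices of $\V$ have images in $B$, contradicting $\phi^{-1}(B) = \{u\}$. Hence $\phi(u) \in \{u',v'\}$; by symmetry $\phi(v) \in \{u',v'\}$, and then injectivity of $\phi$ forces $\{\phi(u),\phi(v)\} = \{u',v'\} \in \E'$.

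Having established that $\phi$ is an injection $\V \to \V'$ sending each edge of $\G$ to an edge of $\G'$, the subgraph $\set{H} := \bigl(\phi(\V), \{\{\phi(u),\phi(v)\} \,|\, \{u,v\} \in \E\}\bigr)$ of $\G'$ is isomorphic to $\G$ via $\phi$, which proves $\G \preceq \G'$. The main obstacle is the block-counting step in the reverse direction: the hypothesis $|\V| \ge 4$ is used precisely there, to guarantee that the ``large'' block $\V' \setminus \{u',v'\}$ cannot have a singleton preimage under an injective $\phi$, and this is what pins down the edge-preserving structure of $\phi$.
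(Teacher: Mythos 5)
Your proof is correct and takes essentially the same approach as the paper: the forward direction is identical, and your reverse direction makes the same use of Lemma~\ref{lem:graphmin} and Lemma~\ref{lem:minimality} to get injectivity of $\phi$ and then pins down $e' = \{\phi(u),\phi(v)\}$ by comparing singleton blocks against the large block $\V' \setminus \{u',v'\}$. If anything, your block-counting step makes explicit the cardinality argument the paper leaves implicit in asserting $\pi_e = \pi'_{e'} \circ \phi$.
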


\begin{proof}
  For notational convenience, let~$\Partof{D(\G)} = \{\pi_e \,|\, e
  \in \E\}$ and~$\Partof{D(\G')} = \{\pi'_{e'} \,|\, e' \in \E'\}$.
  If $\G \preceq \G'$, then there is an injective map $\phi: \V \to
  \V'$ such that, for all $u, v \in \V$, $\{u,v\} \in \E$ implies
  $\{\phi(u), \phi(v)\} \in \E'$. That is, for all $e \in \E$, we have
  $\pi'_{\phi(e)} \in \Partof{D(\G')}$.  Construct a map $\alpha:
  \Partof{D(\G)} \to \Partof{D(\G')}$ such that for all $e \in \E$, we
  set $\alpha(\pi_e) = \pi'_{\phi(e)}$. One can now easily see that
  for all $e \in \E$, we have $\pi_e = \pi'_{\phi(e)} \circ \phi$, and
  thus $(\phi, \alpha)$ reduces $D(\G)$ to $D(\G')$.
  
  For the converse, assume that $D(\G) \le D(\G')$, and let~$(\phi,
  \alpha)$ be a reduction of $D(\G)$ to $D(\G')$. Since both graphs
  have at least four vertices $D(\G)$ and $D(\G')$ are both
  state-minimal by Lemma~\ref{lem:graphmin}, and therefore the
  function $\phi$ is injective by Lemma~\ref{lem:minimality}. For all
  $e \in \E$, there is $e' \in \E'$ such that $\alpha(\pi_e) =
  \pi_{e'}$ and such that $\pi_e = \pi'_{e'} \circ \phi$. For all~$e =
  \{v,w\}$, this means that $\phi(v) \notequiv_{\pi'_{e'}} \phi(w)$,
  and that the remaining block of~$\pi'_{e'}$ contains at least two
  elements. Thus, $e' = \{\phi(v), \phi(w)\}$, and since $e' \in \E'$,
  we have $\G \contained \G'$. 
\end{proof}

Given a graph~$\G$ with at least four vertices, none of which is
isolated, as well as an integer~$k \ge 4$, in order to decide
whether~$\G$ contains a $k$-clique, one simply constructs the
ASD's~$D(\set{K}_k)$ and $D(\G)$, and checks whether~$D(\set{K}_k) \le
D(\G)$. It is easy to see that the reduction is polynomial-time, and
this implies $\mathcal{NP}$-completeness\footnote{Of course, the
$k$-clique problem is still $\mathcal{NP}$-complete even when imposing
$k \ge 4$ and when looking at graphs with no isolated
vertices.}. Lemma~\ref{lem:red2} also implies that $D(\G) \equiv
D(\G')$ if and only if $\G \cong \G'$ for any two graphs~$\G, \G'$ as
in the statement of the lemma. Hence, deciding equivalence of ASD's is
at least as difficult as deciding graph isomorphism, since deciding
isomorphism is clearly not (computationally) easier when restricted to
such graphs. This completes the proof of
Theorem~\ref{thm:npcompleteness}.

We conclude this section by noting that one can provide a simple two-round
interactive proof for the problem of deciding non-equivalence of ASD's (see
Appendix~\ref{app:interact}).  This means that deciding non-equivalence is
in the complexity class~$\mathcal{IP}(2)$, and hence also
in~$\mathcal{AM}$~\cite{GS86}.  For this reason, if the problem of deciding
equivalence of ASD's were~$\mathcal{NP}$-complete, we would
have~$\mathcal{NP} \subseteq \textsf{co-}\mathcal{AM}$, and it is
well-known~\cite{BHZ87} that this implies a collapse of the polynomial
hierarchy~$\mathcal{PH}$ to its second level. Therefore, it is very
unlikely that deciding device equivalence is~$\mathcal{NP}$-complete.

\section{Binary ASD's and Unique Factorizations}
\label{sec:products}

We say that an ASD~$D$ has \emph{direct product factorization}
$\bigtimes_{i = 1}^m D_i$ if this product is equivalent to $D$.
Furthermore, an ASD $D$ is \emph{prime} if, whenever $D \equiv E \times
E'$, then either $E$ or $E'$ is trivial. For example, if $D$ is minimal
with a partition~$\pi \in \Partof{D}$ such that $|\pi| = p$ for a prime
number~$p$, then $D$ is prime.  Furthermore, every ASD~$D$ has a prime
factorization with at most $\log |\Stateof{D}|$ factors.

In the following, we look at the class~$\D^{\times}_{2}$ of ASD's having
(at least one) prime factorization consisting uniquely of binary ASD's.
Note that this class is closed under taking direct products. The following
lemma provides a strong necessary and sufficient condition for deciding
reducibility among members of the class~$\D^{\times}_{2}$ with the same
number of states, and such that no perfect factor appears in their binary
factorization. The reader is referred to
Appendix~\ref{app:binary_device_product} for a proof.

\begin{lemma}
  \label{lem:binary_device_product} Let $D_1, \ldots, D_m$, $D'_1,
  \ldots, D'_n$ be non-perfect state-minimal binary ASD's such that
  $\prod_{i = 1}^m |\Stateof{D_i}| = \prod_{j = 1}^n
  |\Stateof{D'_j}|$. Then $\bigtimes_{i = 1}^m D_i \le \bigtimes_{j =
  1}^n D'_j$ holds if and only if there exists a partition
  $\{J_1,\ldots, J_m\}$ of the indices $\{1,\ldots,n\}$ such that $D_i
  \le \bigtimes_{j \in J_i} D'_j$ for all $i \in \{1,\ldots,m\}$.
\end{lemma}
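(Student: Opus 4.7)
The ``if'' direction is immediate from Proposition~\ref{prop:red_prod_seq}(i): individual reductions $D_i \le \bigtimes_{j \in J_i} D'_j$ combine via direct products to give $\bigtimes_{i=1}^m D_i \le \bigtimes_{i=1}^m \bigtimes_{j \in J_i} D'_j = \bigtimes_{j=1}^n D'_j$, using the hypothesis that $\{J_1,\ldots,J_m\}$ partitions $\{1,\ldots,n\}$.

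For the ``only if'' direction, set $D := \bigtimes_i D_i$ and $D' := \bigtimes_j D'_j$, and fix a reduction $(\phi,\alpha)$ of $D$ to $D'$. Iterating Lemma~\ref{lem:minimality}(iii) shows that both $D$ and $D'$ are state-minimal; combined with $|\Stateof{D}| = |\Stateof{D'}|$ and Lemma~\ref{lem:minimality}(i), $\phi$ is forced to be a bijection. The plan is to identify, for each $j \in \{1,\ldots,n\}$, a unique index $i(j) \in \{1,\ldots,m\}$ such that the coordinate map $\Phi_j := \mathrm{pr}'_j \circ \phi : \Stateof{D} \to \Stateof{D'_j}$ factors as $\Phi_j = \tilde{\Phi}_j \circ \mathrm{pr}_{i(j)}$ for some $\tilde{\Phi}_j$, and then to set $J_i := \{j : i(j) = i\}$.

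The key input for the factorization is that every non-perfect state-minimal binary ASD $E$ (which necessarily has $|\Stateof{E}| \ge 3$) satisfies both $\bigwedge \Partof{E} = \id_{\Stateof{E}}$ and $\bigvee \Partof{E} = \{\Stateof{E}\}$. The second identity follows from a short combinatorial argument specific to the binary structure: if $B$ were a non-trivial block of $\bigvee \Partof{E}$, then every binary $\pi \in \Partof{E}$ would be forced to equal $\{B, \Stateof{E} \setminus B\}$, so $\bigwedge \Partof{E}$ would also equal that partition, contradicting state-minimality. Using $\bigvee \Partof{D_k} = \{\Stateof{D_k}\}$ together with the distributivity of direct products over joins (Section~\ref{sec:preliminaries}), each axis partition $\pi_i \circ \mathrm{pr}_i$ of $\Stateof{D}$ can be written as a lattice join of product partitions in $\Partof{D}$. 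Applying the monotonicity of lattice polynomials under $(\phi,\alpha)$, together with the product form of $\alpha$ and the fact that each pullback $\pi'_j \circ \Phi_j$ is a binary partition of $\Stateof{D}$, then forces each $\Phi_j$ to factor through $\mathrm{pr}_{i(j)}$ for a uniquely determined $i(j)$; uniqueness follows because $\Phi_j$ constant on the fibers of two distinct projections would be constant overall, contradicting surjectivity of $\Phi_j$ onto a set of size at least three.

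With the factorization in hand, the size equality $\prod_i |\Stateof{D_i}| = \prod_j |\Stateof{D'_j}|$ rules out any empty $J_i$, so $\{J_1,\ldots,J_m\}$ is indeed a partition of $\{1,\ldots,n\}$. The maps $(\tilde{\Phi}_j)_{j \in J_i}$ assemble into a single function $\Stateof{D_i} \to \bigtimes_{j \in J_i} \Stateof{D'_j}$ which, together with a suitable restriction of $\alpha$ to the partitions of $D_i$, yields the reduction $D_i \le \bigtimes_{j \in J_i} D'_j$. The main obstacle throughout is the single-coordinate factorization claim for each $\Phi_j$; once it is established, the remaining steps are bookkeeping relying only on results already proved in the paper.
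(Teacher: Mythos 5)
Your overall architecture matches the paper's: the ``if'' direction via Proposition~\ref{prop:red_prod_seq}, bijectivity of $\phi$ via Lemma~\ref{lem:minimality} and the cardinality hypothesis, a coordinate-separation statement for the bijection, and then bookkeeping to assemble the individual reductions. Your separation statement ($\mathrm{pr}'_j \circ \phi$ factors through a single source projection $\mathrm{pr}_{i(j)}$) is the transpose of the paper's (which works with $\phi^{-1}$ and shows that varying one target coordinate perturbs exactly one source coordinate, Claims~\ref{clm:clm1} and~\ref{clm:clm2} in Appendix~\ref{app:binary_device_product}); the two formulations are equivalent and both true. The observations that non-perfect state-minimal binary devices have at least three states and trivial join are also correct and are used by the paper as well.

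However, the step you yourself flag as ``the main obstacle'' --- forcing each $\Phi_j$ to factor through a single projection --- is asserted rather than proved, and the mechanism you propose does not deliver it. The hypothesis $\alpha(\pi)\circ\phi \refines \pi$ is one-sided and only constrains pullbacks of partitions in $\range{\alpha}$. Writing the axis partition $\{\Stateof{D'_1}\}\times\cdots\times\pi'_j\times\cdots\times\{\Stateof{D'_n}\}$ as a join of product partitions of $D'$ is of no use, because most of those product partitions need not lie in $\range{\alpha}$ and their pullbacks under $\phi$ are unconstrained. Going the other way, writing the source axis partition $\pi_i\circ\mathrm{pr}_i$ as $\bigvee\{\vec{\pi}:\pi_i \text{ fixed}\}$ and applying monotonicity yields only $\bigl(\bigvee\alpha(\vec{\pi})\bigr)\circ\phi \refines \pi_i\circ\mathrm{pr}_i$, a statement about $\alpha$'s image refining a two-block partition, which does not pin down the dependence pattern of $\Phi_j$. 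What actually closes the gap in the paper is a pigeonhole argument that your sketch never invokes: if varying the $j$-th target coordinate moved \emph{two} source coordinates, one could pick three states of $\Stateof{D'_j}$ (here $|\Stateof{D'_j}|\ge 3$ is essential), separate their three $\phi^{-1}$-images by a product partition $\vec{\pi}\in\Partof{D}$ using state-minimality of the factors, and observe that the \emph{binary} $j$-th component of $\alpha(\vec{\pi})$ must identify two of the three target states, contradicting $\alpha(\vec{\pi})\circ\phi\refines\vec{\pi}$; a further argument (Claim~\ref{clm:clm2}) shows the moved coordinate is independent of the fixed coordinates. Without an argument of this kind your proof does not go through.
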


As a corollary of this fact, for given linear devices~$L_{k_1},
\ldots, L_{k_m}$, $L_{r_1}, \ldots, L_{r_n}$ with $\sum_{i = 1}^m
k_i = \sum_{j = 1}^n r_j$, we have~$\bigtimes_{i = 1}^m L_{k_i} \le
\bigtimes_{j = 1}^n L_{r_j}$ if and only if $m \le n$ and there exists
a partition~$\{J_1,\ldots, J_m\}$ of $\{1, \ldots, n\}$ such that $k_i
= \sum_{j \in J_i} r_j$. For instance, one can see that $L_3 \times
L_3 \nleq L_2 \times L_2 \times L_2$. Otherwise, the above would imply
that~$L_3 \le L_2$, which is obviously false.

The following theorem makes use of Lemma~\ref{lem:binary_device_product} to
show that the factorization in terms of \emph{binary} ASD's in unique.

\begin{theorem}
  \label{thm:product}
  Let $D$ be an ASD, and assume that~$\bigtimes_{i = 1}^m
  D_i$ is a factorization of~$D$ where $D_1, \ldots, D_m$ are
  \emph{binary}. Then, this factorization is unique (with respect to
  the set of all factorizations into binary devices), up to order and
  equivalence of the factors.
\end{theorem}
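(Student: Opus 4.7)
My plan is to apply Lemma~\ref{lem:binary_device_product} in both directions of the equivalence after normalizing the factors and separating off the perfect ones. First I would replace each binary factor on both sides by a state-minimal equivalent; merging states that lie in a common block of every partition (as in the proof of Theorem~\ref{thm:minimality}) cannot reduce the number of blocks of any binary partition below two, so the minimal equivalent of a binary ASD is again binary. By Lemma~\ref{lem:minimality}(iii) the resulting products are themselves state-minimal, hence $\prod_i |\Stateof{D_i}| = |\Stateof{D}| = \prod_j |\Stateof{D'_j}|$. Proposition~\ref{prop:cap}(ii) together with $C(E)=1$ for every binary ASD $E$ then yields $m = C(D) = n$.

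Next I would separate the perfect factors (each equivalent to $C_2$, since a state-minimal binary ASD is perfect iff it has exactly two states) from the non-perfect ones, writing $D \equiv C_2^p \times E \equiv C_2^{p'} \times F$ with $E, F$ direct products of non-perfect state-minimal binary ASDs. The goal at this stage is to prove $p = p'$ and $E \equiv F$ via cancellation. With that in hand, Lemma~\ref{lem:binary_device_product} applies to the reductions $E \le F$ and $F \le E$: each application yields a partition of the $(m-p)$-element index set into $(m-p)$ non-empty blocks, forced to be singletons. This gives permutations $\sigma, \tau$ on the non-perfect indices with $D_i \le D'_{\sigma(i)}$ and $D'_j \le D_{\tau(j)}$. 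An orbit argument on $\tau \circ \sigma$, in the same spirit as the one in the proof of Proposition~\ref{prop:minimal_equivalence}, upgrades these to equivalences $D_i \equiv D'_{\sigma(i)}$; combined with the trivial pairing of the $p = p'$ perfect factors, this yields the desired bijection up to equivalence.

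The main obstacle is the cancellation step. To prove that $C_2^p \times E \equiv C_2^{p'} \times F$ implies $p = p'$ and $E \equiv F$, I would invoke Proposition~\ref{prop:minimal_equivalence} to obtain bijections $\phi, \alpha$ between state spaces and partition sets satisfying $\pi = \alpha(\pi) \circ \phi$, and exploit that any two points of $C_2^p \times E$ differing only in their $C_2$-coordinates are separated by every partition in its partition set (because each such partition includes the identity on the $C_2$ factors). Combined with the state-minimality of $F$ (so that $\bigwedge \Partof{F} = \id_{\Stateof{F}}$), this universal separation property constrains $\phi$ to respect the $C_2$ versus non-$C_2$ splitting of the state space, from which one can extract the required equivalence $E \equiv F$ and the equality $p = p'$. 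Making this projection of $\phi$ onto the non-$C_2$ part precise is the technical crux of the whole proof.
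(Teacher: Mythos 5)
Your plan retraces the paper's proof essentially step for step: normalize all factors to minimal binary ASD's, split off the perfect ones (each equivalent to $C_2$), cancel the perfect part, and then apply Lemma~\ref{lem:binary_device_product} in both directions so that the partition of a $k$-element index set into $k$ nonempty blocks forces a permutation, which the orbit argument upgrades to factorwise equivalences. The only correction concerns your ``crux'': the operative invariant is not the meet condition $\bigwedge \Partof{F} = \id_{\Stateof{F}}$ but the \emph{join} --- since every non-perfect minimal binary factor has trivial join, $\bigvee \Partof{C_2^p \times E} = \id_{\{0,1\}^p} \times \bigl\{\Stateof{E}\bigr\}$ has exactly $2^p$ blocks of the form $\{s\} \times \Stateof{E}$, and the bijection $\phi$ from Proposition~\ref{prop:minimal_equivalence} preserves the join, hence carries each such block onto some $\{t\} \times \Stateof{F}$, yielding $p = p'$ and $E \equiv F$ at once. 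Your ``universal separation'' relation is the complement of ``identified by some partition,'' whose transitive closure is precisely this join (note that within a single block $\{s\} \times \Stateof{E}$ some pairs \emph{are} universally separated, so you must pass to connected components before the splitting of $\phi$ follows).
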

\begin{proof}
Let $D_1, \ldots, D_m, D'_1, \ldots, D'_m$ be binary ASD's such
that~$\bigtimes_{i = 1}^m D_i \equiv \bigtimes_{j = 1}^m D'_j$. In
order to prove the theorem, it suffices to show that these
factorizations are equivalent, that is, there exists a
permutation~$\gamma: \{1, \ldots, m\} \to \{1, \ldots, m\}$ such that
$D_i \equiv D'_{\gamma(i)}$ for all $i = 1,\ldots, m$. Without loss of
generality, assume that all devices are minimal.

First, note that for a minimal binary ASD's~$D$, we have $\bigvee
\Partof{D} = \id_{\Stateof{D}}$ whenever $D$ is perfect,
whereas~$\bigvee \Partof{D} = \{\Stateof{D}\}$ otherwise. Therefore,
if exactly~$\ell$ binary devices in the product~$\bigtimes_{i = 1}^m
D_i$ are perfect, then~$\Bigl| \bigvee \Partof{\bigtimes_{i = 1}^m
D_i} \Bigr| = \Bigl| \bigtimes_{i = 1}^m \bigl( \bigvee \Partof{D_i}\bigr) \Bigr| =
2^\ell$. For this reason, both products~$\bigtimes_{i = 1}^m D_i$ and
$\bigtimes_{j = 1}^m D'_j$ have exactly the same number of perfect
binary devices, otherwise they would not be equivalent by
Proposition~\ref{prop:minimal_equivalence}. Hence, we can rewrite both
products as
\begin{displaymath}
  C \times E_1 \times \cdots \times E_k \equiv C \times E'_1 \times
\cdots \times E'_k
\end{displaymath}
for some $k \le m$, non-perfect binary ASD's~$E_1, \ldots, E_k, E'_1,
\ldots, E'_k$, and a perfect ASD~$C$. By
Proposition~\ref{prop:minimal_equivalence}, there exist bijections
$\phi: \Stateof{C \times E_1 \times \cdots \times E_k} \to \Stateof{C
\times E'_1 \times \cdots \times E'_k}$ and $\alpha: \Partof{E_1
\times \cdots \times E_k} \to \Partof{E'_1 \times \cdots \times E'_k}$
such that
\begin{equation}
  \label{eq:equiv}
  \id_{\Stateof{C}} \times \pi = (\id_{\Stateof{C}} \times \alpha(\pi)
  )\circ \phi
\end{equation}
for all $\pi \in \Partof{E_1 \times \cdots \times E_k}$. This in
particular implies
\begin{multline*}
\id_{\Stateof{C}} \times \bigl\{ \Stateof{E_1 \times \cdots \times E_k} \bigr\}
= \bigvee \Partof{C \times E_1 \times \cdots \times E_k} \\ =
\Bigr(\bigvee \Partof{C \times E'_1 \times \cdots \times E'_k} \Bigl)
\circ \phi = \Bigl(\id_{\Stateof{C}} \times \bigl\{ \Stateof{E'_1 \times
  \cdots \times E'_k} \bigr\} \Bigr) \circ \phi.
\end{multline*}
For a fix~$s \in \Stateof{C}$ and any two~$e_0, e_1 \in \Stateof{E_1
\times \cdots \times E_k}$ we have $(s, e_0) \equiv_{\bigvee \Partof{C
\times E_1 \times \cdots \times E_k}} (s, e_1)$ by
Proposition~\ref{prop:minimal_equivalence}, and thus $\phi(s, e_0)
\equiv_{\bigvee \Partof{C \times E'_1 \times \cdots \times E'_k}}
\phi(s, e_1)$. In order for this to hold, there has to exist~$t \in
\Stateof{C}$ such that $\phi(s, e) = (t, e')$ for all $e \in
\Stateof{\bigtimes_{i = 1}^k E_i}$, where $e' \in
\Stateof{\bigtimes_{i = 1}^k E'_i}$.

Without loss of generality, we can assume that there exists a
bijection~$\tilde{\phi}: \Stateof{E_1 \times \cdots \times E_k} \to
\Stateof{E'_1 \times \cdots \times E'_k}$ such that~$\phi(s, e) = (s,
\tilde{\phi}(e))$, and therefore by~(\ref{eq:equiv}) we have
$\id_{\Stateof{C}} \times \pi = \id_{\Stateof{C}} \times (\alpha(\pi)
\circ \tilde{\phi})$ for all~$\pi$. This implies that $\pi =
\alpha(\pi) \circ \phi$, and thus $E_1 \times \cdots \times E_k \equiv
E'_1 \times \cdots \times E'_k$, again by
Proposition~\ref{prop:minimal_equivalence}. 

It now suffices to prove that these two last factorizations are
equivalent in order to conclude the proof. Note that since all devices
are non-perfect, both $\bigtimes_{i = 1}^k E_i \le \bigtimes_{i = 1}^k
E'_i$ and $\bigtimes_{i = 1}^k E'_i \le \bigtimes_{i = 1}^k E_i$
hold. By Lemma~\ref{lem:binary_device_product}, there exist
permutations $\gamma, \gamma'$ of $\{1,\ldots, k\}$ such that $E_i \le
E'_{\gamma(i)}$ and $E'_j \le E'_{\gamma'(j)}$.  Assume that there is
an $i \in \{1,\ldots,k\}$ such that $E_i$ and $E'_{\gamma(i)}$ are not
equivalent, i.e., $E'_{\gamma(i)} \nleq E_i$.  Define $\tilde{\gamma}
= \gamma' \circ \gamma$. Then, $E_i \le E'_{\gamma(i)} \le
E_{\tilde{\gamma}^r(i)}$ for all $r > 0$. Since $k$ is finite there
exists $r' > 0$ such that $\tilde{\gamma}^{r'}(i) = i$.  Therefore,
$E_i \equiv E'_{\gamma(i)}$, which is a contradiction.
\end{proof}

An immediate corollary of the theorem is the following.

\begin{corollary}
Two products of binary linear devices are equivalent if and only if
they consist of exactly the same devices.
\end{corollary}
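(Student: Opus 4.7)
The plan is to derive the corollary as a direct application of Theorem~\ref{thm:product} to the special case of binary linear devices, once one checks that the $L_k$'s satisfy its hypotheses and that the "same devices" conclusion can be read off from equivalence between individual $L_k$'s. The reverse direction is immediate: if the two products consist of exactly the same multiset of $L_k$'s, they differ only by the order of factors, and since the direct product is commutative and associative up to equivalence (as noted in Section~\ref{sec:reducibility}), they are equivalent.

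For the forward direction, suppose $\bigtimes_{i=1}^{a} L_{n_i} \equiv \bigtimes_{j=1}^{b} L_{m_j}$. First I would argue that the two products have the same number of factors, $a=b$. Each $L_k$ is a binary device (its non-trivial partitions are the kernels of non-zero linear forms $\{0,1\}^k\to\{0,1\}$, each with exactly two blocks; the zero form contributes only the trivial partition, which is refined by all others and therefore redundant in the partition-minimal form). Hence $C(L_k)=1$ by Proposition~\ref{prop:cap}(i), and so by Proposition~\ref{prop:cap}(ii) the storage capacity of each product equals its number of factors. Since $C$ is order-preserving (and in particular equivalence-invariant), it follows that $a=b$.

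With $a=b$ established, I would invoke Theorem~\ref{thm:product} directly: the two products are now binary factorizations of the same equivalence class, so they coincide up to order and equivalence of the factors. That is, there is a permutation $\gamma$ of $\{1,\ldots,a\}$ with $L_{n_i}\equiv L_{m_{\gamma(i)}}$ for every $i$. Finally, since each $L_k$ is state-minimal (as observed right after Theorem~\ref{thm:minimality}) and has $2^k$ states, Lemma~\ref{lem:minimality}(i) forces $|\Stateof{L_{n_i}}|=|\Stateof{L_{m_{\gamma(i)}}}|$, i.e.\ $n_i=m_{\gamma(i)}$. Thus the two families $(n_i)_i$ and $(m_j)_j$ agree as multisets.

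There is essentially no hard step in this argument: Theorem~\ref{thm:product} does the real work. The only point that requires care is the preliminary observation that $L_k$ fits under the umbrella "binary ASD" and is state-minimal, so that both the unique-factorization result and Lemma~\ref{lem:minimality} can be applied; once this is confirmed, everything else is bookkeeping.
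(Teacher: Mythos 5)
Your proof is correct and follows exactly the route the paper intends: the paper offers no explicit argument beyond calling the corollary immediate from Theorem~\ref{thm:product}, and your unpacking (equal number of factors via storage capacity, the permutation from the unique-factorization theorem, and $L_{n}\equiv L_{m}\Rightarrow n=m$ via state-minimality and Lemma~\ref{lem:minimality}) is the natural way to make that precise. The preliminary remarks that $L_k$ is binary up to discarding the redundant trivial partition and that it is state-minimal are exactly the right points to check.
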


For instance, the corollary immediately yields~$L_4 \times L_3 \times
L_3 \notequiv L_4 \times L_4 \times L_2$. Note that this
non-equivalence could not be proved using simpler arguments based on
order-preserving functions.

We stress that Theorem~\ref{thm:product} does not rule out the fact
that there might be additional factorizations in terms of non-binary
prime ASD's. Indeed, the general question of deciding whether prime
factorizations of ASD's are unique appears to be challenging. For
instance, it is easy to see that every perfect ASD~$C_m$ where $m =
\prod_{i = 1}^r p_i^{\alpha_i}$ for distinct primes $p_1, \ldots,
p_r$, and positive integers $\alpha_1, \ldots, \alpha_r$ can be
uniquely factorized as $\bigtimes_{i = 1}^r C_{p_i}^{\alpha_i}$. We
leave the more general question as an open problem. Note that the
problem is related to a line of research investigating unique
factorizations of {\em relational structures} (cf.\ e.g.~\cite{Jo66}
for a survey). Even though ASD's are related to relational structures,
known results only apply to a weaker form of direct product.

\section*{Acknowledgments}
This research was partially supported by the Swiss National Science
Foundation (SNF), project no. 200020-113700/1. We also thank Thomas
Holenstein for helpful discussions.

\appendix

\section{Direct Products of Set Partitions}
\label{app:partitions}
We prove here two facts about direct products of partitions. The first
proposition states that one can look at the refinement order component
wise.

\begin{proposition} \label{prop:productorder}
  Let~$\set{S}, \set{S}'$ be sets, $\pi, \rho \in
  \Part{\set{S}}$, and  $\pi', \rho' \in \Part{\set{S}'}$. Then
\begin{equation*}
  (\pi \times \pi') \refines (\rho \times \rho')
  \,\Longleftrightarrow\,(\pi \refines \rho) \land (\pi' \refines
  \rho').
\end{equation*}
\end{proposition}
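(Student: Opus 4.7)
The plan is to prove both implications directly from the definition of the direct product $\pi \times \pi' = \{B \times B' \mid B \in \pi, B' \in \pi'\}$, using only the fact that blocks of a partition are non-empty.

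For the forward implication, I would assume $\pi \times \pi' \refines \rho \times \rho'$ and show $\pi \refines \rho$ (the argument for $\pi' \refines \rho'$ being symmetric). Fix an arbitrary block $B \in \pi$ and pick any block $B' \in \pi'$ (which exists and is non-empty provided $\set{S}'$ is non-empty). Then $B \times B' \in \pi \times \pi'$, so by hypothesis there exist $C \in \rho, C' \in \rho'$ with $B \times B' \subseteq C \times C'$. Projecting to the first coordinate and using $B' \ne \emptyset$ yields $B \subseteq C$, as desired.

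For the backward implication, I would assume $\pi \refines \rho$ and $\pi' \refines \rho'$ and take an arbitrary block $B \times B' \in \pi \times \pi'$. By hypothesis, there exist $C \in \rho$ and $C' \in \rho'$ with $B \subseteq C$ and $B' \subseteq C'$, hence $B \times B' \subseteq C \times C' \in \rho \times \rho'$, giving $\pi \times \pi' \refines \rho \times \rho'$.

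Neither direction presents a real obstacle; the only subtlety is handling the degenerate cases $\set{S} = \emptyset$ or $\set{S}' = \emptyset$, where $\pi \times \pi'$ is the empty partition and the equivalence holds vacuously. I would briefly address this at the start to avoid worrying about empty projections during the main argument. Since the proof is short and purely set-theoretic, I would write it as two short paragraphs, one per direction, each a few lines long.
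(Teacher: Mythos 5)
Your proposal is correct and follows essentially the same route as the paper: both directions reduce to the observation that $\set{B} \times \set{B}' \subseteq \set{C} \times \set{C}'$ holds if and only if $\set{B} \subseteq \set{C}$ and $\set{B}' \subseteq \set{C}'$ (for non-empty blocks), applied blockwise. Your extra care about non-emptiness and the degenerate empty-set case is a minor refinement the paper glosses over, not a different argument.
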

\begin{proof}
  The proof follows from the fact that, given sets~$\set{B}, \set{B}',
  \set{C}$, and $\set{C}'$, we have $\set{B} \times \set{B}' \subseteq
  \set{C} \times \set{C}'$ if and only if $\set{B} \subseteq \set{C}$
  and $\set{B}' \subseteq \set{C}'$. If $\pi \refines \rho$ and $\pi'
  \refines \rho'$ both hold, then for every $\set{B} \in \pi$,
  $\set{B}' \in \pi'$ there have to exist $\set{C} \in \rho, \set{C}'
  \in \rho'$ such that $\set{B} \subseteq \set{C}$ and $\set{B}'
  \subseteq \set{C}'$, and hence $\set{B} \times \set{B}' \subseteq
  \set{C} \times \set{C}'$, which implies~$(\pi \refines \rho)\text{
  and }(\pi' \refines \rho')$. Conversely, if $(\pi \times \pi')
  \refines (\rho \times \rho')$, then for every $\set{B} \times
  \set{B}'$ there is $\set{C} \times \set{C}'$ such that $\set{B}
  \subseteq \set{B}'$ and $\set{C} \subseteq \set{C'}$. In particular,
  $\pi \refines \rho$ and $\pi' \refines \rho'$. 
\end{proof}

The second proposition states that the meet (join) of direct product
partitions is the direct product of the meets (joins).

\begin{proposition} 
  Let $\set{S}, \set{S}'$ be sets, $\pi, \rho \in
  \Part{\set{S}}$, and $\pi', \rho' \in
  \Part{\set{S}'}$. Then
  \begin{enumerate}[(i)]
  \item $(\pi \times \pi') \land (\rho \times \rho') = (\pi \land
    \rho) \times (\pi' \land \rho')$
  \item $(\pi \times \pi') \lor (\rho \times \rho') = (\pi \lor \rho)
    \times (\pi' \lor \rho')$
  \end{enumerate}
\end{proposition}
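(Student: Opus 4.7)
My plan is to prove both parts by characterizing when two elements $(s,s'), (t,t') \in \set{S} \times \set{S}'$ are equivalent under the relevant partition, translating back and forth between the block description and the $\equiv$-relation description.

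For part (i), the cleanest route is to compute the blocks directly. By definition of the meet, the blocks of $(\pi \times \pi') \land (\rho \times \rho')$ are the non-empty intersections of a block of $\pi \times \pi'$ with a block of $\rho \times \rho'$; and for any $\set{B} \in \pi, \set{B}' \in \pi', \set{C} \in \rho, \set{C}' \in \rho'$ we have
\begin{equation*}
  (\set{B} \times \set{B}') \cap (\set{C} \times \set{C}') = (\set{B} \cap \set{C}) \times (\set{B}' \cap \set{C}').
\end{equation*}
This product is non-empty if and only if both factors are. Hence the family of blocks so obtained is exactly $\{ (\set{B} \cap \set{C}) \times (\set{B}' \cap \set{C}') \mid \set{B}\cap\set{C} \ne \emptyset, \set{B}'\cap\set{C}' \ne \emptyset\}$, which is the definition of $(\pi \land \rho) \times (\pi' \land \rho')$.

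For part (ii), the block description is not so convenient because joins are defined via chains, so I would argue at the level of the equivalence relation. The inclusion $(\pi \times \pi') \lor (\rho \times \rho') \refines (\pi \lor \rho) \times (\pi' \lor \rho')$ is immediate since $\pi \times \pi' \refines (\pi \lor \rho) \times (\pi' \lor \rho')$ and $\rho \times \rho' \refines (\pi \lor \rho) \times (\pi' \lor \rho')$ by Proposition~\ref{prop:productorder}, so any join-chain for the left-hand side certifies the right. For the reverse direction, suppose $(s,s') \equiv_{(\pi \lor \rho) \times (\pi' \lor \rho')} (t,t')$, i.e., $s \equiv_{\pi \lor \rho} t$ and $s' \equiv_{\pi' \lor \rho'} t'$. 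Fix chains $s = x_0, x_1, \ldots, x_r = t$ and $s' = y_0, y_1, \ldots, y_q = t'$ witnessing the two join relations in the individual lattices.

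The key trick, which is really the only subtle step, is to lift these two one-coordinate chains into a single chain in $\set{S} \times \set{S}'$ by varying one coordinate at a time while the other stays fixed: consider
\begin{equation*}
  (s,s') = (x_0, y_0), (x_1, y_0), \ldots, (x_r, y_0) = (t, y_0), (t, y_1), \ldots, (t, y_q) = (t,t').
\end{equation*}
For each step in the first half, $x_i \equiv_\pi x_{i+1}$ or $x_i \equiv_\rho x_{i+1}$, and $y_0 \equiv_{\pi'} y_0$ as well as $y_0 \equiv_{\rho'} y_0$ trivially, so $(x_i, y_0) \equiv_{\pi \times \pi'} (x_{i+1}, y_0)$ or $\equiv_{\rho \times \rho'}$, respectively; symmetrically for the second half. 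This is exactly a chain certifying $(s,s') \equiv_{(\pi \times \pi') \lor (\rho \times \rho')} (t,t')$, completing the proof.
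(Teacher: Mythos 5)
Your proposal is correct and follows essentially the same route as the paper: part (i) by direct computation of blocks via $(\set{B} \times \set{B}') \cap (\set{C} \times \set{C}') = (\set{B} \cap \set{C}) \times (\set{B}' \cap \set{C}')$, and part (ii) by the easy refinement direction plus the concatenated chain that first moves the first coordinate (with the second fixed at $s'$) and then the second (with the first fixed at $t$). Your explicit attention to the non-emptiness of the intersecting blocks in part (i) is a minor point the paper glosses over, but the arguments are the same.
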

\begin{proof}
  For the first statement, we have directly
  \begin{equation*}
    \begin{split}
      (\pi \times \pi') \land (\rho \times \rho') & = \{(\set{B}
      \times \set{B}') \cap (\set{C} \times \set{C}')\,|\,\set{B} \in
      \pi, \set{B}' \in \pi', \set{C} \in \rho, \set{C}' \in \rho' \}
      \\ & = \{(\set{B} \cap \set{C}) \times (\set{B}' \cap
      \set{C}')\,|\,\set{B} \in \pi, \set{B}' \in \pi', \set{C} \in
      \rho, \set{C}' \in \rho' \}  \\ & = (\pi \land \rho) \times (\pi'
      \land \rho').
      \end{split}
  \end{equation*}
  To prove the second statement, first note that by definition $\pi
  \refines \pi \lor \rho$ and $\pi' \refines \pi' \lor \rho'$, and
  therefore $\pi \times \pi' \refines (\pi \lor \rho) \times (\pi'
  \lor \rho')$ by Proposition~\ref{prop:productorder}.  Analogously,
  $\rho \times \rho' \refines (\pi \lor \rho) \times (\pi' \lor
  \rho')$, which implies $(\pi \times \pi') \lor (\rho \times \rho')
  \refines (\pi \lor \rho) \times (\pi' \lor \rho')$.  

  Now, let $(s, s'), (t, t') \in \set{S} \times \set{S}'$ be such that
  $(s, s') \equiv_{(\pi \lor \rho) \times (\pi' \lor \rho')} (t,
  t')$. This implies that $s \equiv_{\pi \lor \rho} t$ and $s'
  \equiv_{\pi' \lor \rho'} t'$. There are $y_1, \ldots, y_k \in
  \set{S}$ with $s = y_1$ and $t = y_k$ such that for all $i =
  1,\ldots,k-1$ we have $y_i \equiv_{\pi} y_{i + 1}$ or $y_i
  \equiv_{\rho} y_{i + 1}$. Analogously, there are $y'_1, \ldots,
  y'_\ell \in \set{S}'$ with $s' = y'_1$ and $t' = y'_\ell$ such that
  for all $j = 1,\ldots,\ell-1$ we have $y'_j \equiv_{\pi'} y'_{j +
  1}$ or $y'_j \equiv_{\rho'} y'_{j + 1}$.  In particular, for all $i
  = 1, \ldots, k - 1$ we have $(y_i, s') \equiv_{\pi \times \pi'}
  (y_{i + 1}, s')$ or $(y_{i}, s') \equiv_{\rho \times \rho'} (y_{i +
  1}, s')$. Additionally, for all $j = 1, \ldots, \ell - 1$ we have
  $(t, y'_j) \equiv_{\pi \times \pi'} (t, y'_{j + 1})$ or $(t, y'_j)
  \equiv_{\rho \times \rho'} (t, y_{j + 1}')$. Therefore, $(s, s')
  \equiv_{(\pi \times \pi') \lor (\rho \times \rho')} (t, t')$. That
  is, $(\pi \lor \rho) \times (\pi' \lor \rho') \refines (\pi \times
  \pi') \lor (\rho \times \rho')$, and this implies equality. 
\end{proof}

\section{Interactive Proof for Device Non-Equivalence}
\label{app:interact}

In this section, we briefly sketch a two-round interactive proof for the
problem of non-equivalence of ASD's. The protocol follows the same lines as
the one for graph non-isomorphism.

Assume that Alice and Bob are given a pair of ASD's~$(D_0, D_1)$, and Alice
would like to prove~$D_0 \not\equiv D_1$ to Bob. Also, assume without loss
of generality that~$D_0$ and $D_1$ are minimal, and that~$\Stateof{D_0} =
\Stateof{D_1} = \{1, \ldots, n\}$ for some integer~$n$, and~$|\Partof{D_0}|
= |\Partof{D_1}|$.  Bob starts the protocol by choosing a bit~$b \in
\{0,1\}$ uniformly at random and generates an equivalent device~$D \equiv
D_b$ uniformly at random. (By Proposition~\ref{prop:minimal_equivalence},
this can be done efficiently by choosing an appropriate pair of
permutations~$(\phi, \alpha)$ uniformly at random.). He subsequently sends
the description of~$D$ to Alice.  Finally, Alice returns a bit~$b' \in
\{0,1\}$ to Bob, and Bob accepts if and only if~$b = b'$.

Whenever~$D_0 \not\equiv D_1$ holds, Alice is able to decide whether~$D_0
\equiv D$ or $D_1 \equiv D$, and hence to perfectly guess~$b$. However,
if~$D_0 \equiv D_1$, Alice can make Bob accept with probability at
most~$\frac{1}{2}$ regardless of her strategy.

\section{Proof of Lemma~\ref{lem:binary_device_product}}

\label{app:binary_device_product}
  Sufficiency is obvious. To prove the converse, assume that
  $\bigtimes_{i = 1}^m D_i \le \bigtimes_{j = 1}^n D'_j$, and let
  $(\phi, \alpha)$ be an arbitrary reduction of $\bigtimes_{i = 1}^m
  D_i $ to $\bigtimes_{j = 1}^n D'_j$. By Lemma~\ref{lem:minimality},
  the function $\phi$ is a bijection. In the following, we show that
  such a $\phi$ induces a partition of the set of indices
  $\{1,\ldots,n\}$ as in the statement of the theorem. To do this, we
  introduce the following function $\tau$: Let $j \in \{1,\ldots,n\}$
  and $(\state'_1, \ldots, \state'_{j-1}, \state_{j+1}', \ldots,
  \state'_n) \in \Stateof{D'_1} \times \cdots \times
  \Stateof{D'_{j-1}} \times \Stateof{D'_{j+1}} \times \cdots \times
  \Stateof{D'_n}$, then we define
  \begin{multline*}
    \tau(j, \state'_1, \ldots, \state'_{j-1}, \state'_{j+1}, \ldots, \state'_n) := \\
    \bigl\{j \,|\, |\phi^{-1}_j(\{\state'_1\} \times \cdots \times
    \{\state'_{j-1}\} \times \Stateof{D'_j} \times \{\state'_{j+1}\}
    \times \cdots \times \{\state'_n\} | > 1 \bigr\},
  \end{multline*}
  where $\phi^{-1}_j$ denotes the $j$-th component of the output of
  the function $\phi^{-1}$. In other words, the value~$\tau(j,
  \state'_1, \ldots, \state'_{j-1}, \state'_{j+1}, \ldots, \state'_n)$
  provides the set of indices of the devices in the
  product~$\bigtimes_{i = 1}^m D_i$ for which the state is modified
  when one goes over all possible states of the ASD~$D_j'$, fixing the
  states of the remaining ASD's to $\state'_1, \ldots, \state'_{j-1},
  \state'_{j + 1}, \ldots, \state'_n$, and looks at the output of
  $\phi^{-1}$.  We start by proving the following claim, which states
  that for a given~$j \in \{1,\ldots, n\}$, arbitrarily modifying the
  $j$-th component of a vector in~$\Stateof{D'_1} \times \cdots \times
  \Stateof{D'_n}$ only alters a single component of the output with
  respect to~$\phi^{-1}$.

  \begin{claim}
    \label{clm:clm1}
    For all~$j \in \{1, \ldots, n\}$ and $(\state'_1, \ldots,
    \state'_{j-1}, \state'_{j+1}, \ldots, \state'_n) \in \Stateof{D'_1}
    \times \cdots \times \Stateof{D'_{j - 1}} \times \Stateof{D'_{j+1}}
    \times \cdots \times \Stateof{D'_n}$, we have $|\tau(j,
    \state'_1,\ldots, \state'_{j-1}, \state'_{j+1}, \ldots, \state'_n)| =
    1$.
  \end{claim}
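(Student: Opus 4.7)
My plan is to prove Claim~1 by contradiction. Let $S = \{\state'_1\} \times \cdots \times \Stateof{D'_j} \times \cdots \times \{\state'_n\}$ and $T = \phi^{-1}(S)$, and write $T_i$ for the projection of $T$ to $\Stateof{D_i}$, so that $\tau = \{i : |T_i| > 1\}$. For the easy direction $|\tau| \ge 1$: since $D'_j$ is binary, state-minimal and non-perfect, $|\Stateof{D'_j}| \ge 3$ (the case $|\Stateof{D'_j}| = 2$ would force the only two-block separating partition to be $\id$, contradicting non-perfectness). Because $\phi$ is a bijection, $|T| = |S| = |\Stateof{D'_j}| \ge 3$, so $T$ cannot collapse to a single point and some coordinate must vary.

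The heart of the argument is showing $|\tau| \le 1$. Suppose for contradiction that $i_1, i_2 \in \tau$ are distinct. The key sublemma is: for every product partition $\tilde{\pi} = \pi_1 \times \cdots \times \pi_m \in \Partof{\bigtimes_i D_i}$, the restricted partition $\tilde{\pi}|_T$ has at most two blocks. Indeed, $\alpha(\tilde{\pi}) = \pi'_1 \times \cdots \times \pi'_n$ with each $\pi'_k \in \Partof{D'_k}$, and since all elements of $S$ agree in every coordinate except the $j$-th, $\alpha(\tilde{\pi})|_S$ is naturally identified with the two-block partition $\pi'_j \in \Partof{D'_j}$; the refinement $\alpha(\tilde{\pi}) \circ \phi \refines \tilde{\pi}$ then forces $\tilde{\pi}|_T$ to be coarser than this two-block pullback.

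Using this sublemma, I pick $\pi_{i_1} \in \Partof{D_{i_1}}$ separating two elements of $T_{i_1}$ (possible since $|T_{i_1}| \ge 2$ and $\bigwedge \Partof{D_{i_1}} = \id_{\Stateof{D_{i_1}}}$ by state-minimality), an arbitrary $\pi_{i_2} \in \Partof{D_{i_2}}$, and arbitrary $\pi_i$ for $i \notin \{i_1,i_2\}$. The signature map $t \mapsto (\pi_i(t_i))_i$ already attains both values on $T$ through its $i_1$-component, so it attains exactly two values, and its two classes must coincide with $C^b := \{t \in T : \pi_{i_1}(t_{i_1}) = b\}$ for $b \in \{0,1\}$. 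Within each $C^b$ every signature component is constant, in particular $\pi_{i_2}(t_{i_2})$. Since this holds for \emph{every} $\pi_{i_2} \in \Partof{D_{i_2}}$ and $\bigwedge \Partof{D_{i_2}} = \id_{\Stateof{D_{i_2}}}$, state-minimality forces $t_{i_2}$ itself to be constant on each $C^b$. The same reasoning applies to every $i \in \tau \setminus \{i_1\}$, and coordinates $i \notin \tau$ are already constant on $T$; hence $t$ is determined by $\pi_{i_1}(t_{i_1})$ alone, giving $|T| \le 2$ and contradicting $|T| \ge 3$.

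The main obstacle is the two-block sublemma on $\tilde{\pi}|_T$, which is where the product form of $\alpha$'s image, the binarity of every $D'_k$, and the one-coordinate structure of $S$ come together. Once it is established, state-minimality of the individual $D_i$'s converts signature-level equalities into coordinate-level equalities almost mechanically, and the contradiction follows.
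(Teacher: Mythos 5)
Your key sublemma is correct and is in fact the same engine that drives the paper's own proof: since $\alpha(\pi_1\times\cdots\times\pi_m)$ is again a product partition and the elements of $S$ differ only in the $j$-th coordinate, its restriction to $S$ is governed by the single two-block partition $\pi'_j$, so the reduction condition forces $(\pi_1\times\cdots\times\pi_m)|_T$ to have at most two blocks. The paper exploits this by picking three distinct points of $T$ (using $|\Stateof{D'_j}|\ge 3$) and explicitly constructing, via a short case analysis, a product partition separating all three pairwise, which is an immediate contradiction. Your route is more structural, but its last step has a gap. From the two signature classes $C^0,C^1$ you correctly deduce that every coordinate $i\ne i_1$ of $t\in T$ is constant on each class. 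You then conclude that ``$t$ is determined by $\pi_{i_1}(t_{i_1})$ alone,'' but you never show that the coordinate $i_1$ itself is constant on each class: all you know there is that $t_{i_1}$ lies in a fixed block of $\pi_{i_1}$, and the device you use for $i\in\tau\setminus\{i_1\}$ (vary $\pi_i$ over all of $\Partof{D_i}$ and invoke $\bigwedge\Partof{D_i}=\id_{\Stateof{D_i}}$) cannot be applied to $i_1$, because varying $\pi_{i_1}$ changes the classes $C^b$ themselves. So $|T|\le 2$ does not follow as written: a class could a priori contain several points of $T$ differing only in coordinate $i_1$.

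The gap is repairable with material you already have. Since $i_2\in\tau$, the coordinate $t_{i_2}$ takes at least two values on $T$ and is constant on each class, hence takes exactly two values, one per class; thus the partition $\{C^0,C^1\}$ of $T$ is also determined by the $i_2$-coordinate. Now fix $\rho_{i_2}\in\Partof{D_{i_2}}$ separating those two values (state-minimality) and let the partition in position $i_1$ range over all of $\Partof{D_{i_1}}$: the two signature classes remain $C^0$ and $C^1$, and your own argument then shows $t_{i_1}$ is constant on each class, giving $|T|\le 2$ and the contradiction with $|T|\ge 3$. With this one additional symmetric step your proof is complete, and it is a genuinely different derivation of the contradiction from the paper's three-point separation argument, though both rest on the same two-block observation.
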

  \begin{proof}
    Assume, towards a contradiction, that the claim is false. In
    particular, there are states~$\tilde{\state}^1, \tilde{\state}^2
    \in \Stateof{D'_j}$ such that
    \begin{displaymath}
      \begin{split}
      (\state^1_1,\ldots, \state^1_m) & := \phi^{-1}(\state'_1,
    \ldots, \state'_{j-1}, \tilde{\state}^1, \state'_{j+1}, \ldots,
    \state'_n), \\ (\state^2_1, \ldots, \state^2_m) & :=
    \phi^{-1}(\state'_1, \ldots, \state'_{j-1}, \tilde{\state}^2,
    \state'_{j+1}, \ldots, \state'_n)
    \end{split}
    \end{displaymath}
    differ in two components $p$ and $q$, that is, $\state^1_p \ne
    \state^2_p$ and $\state^1_q \ne \state^2_q$. Since
    $|\Stateof{D_j'}| \ge 3$ by our assumption, pick an arbitrary
    third element $\tilde{\state}^3 \in \Stateof{D'_j}$ different from
    $\tilde{\state}^1$ and $\tilde{\state}^2$, and define
    $(\state^3_1, \ldots, \state^3_m) := \phi^{-1}(\state'_1, \ldots,
    \state'_{j-1}, \tilde{\state}^3, \state'_{j+1}, \ldots,
    \state'_n)$.  We look for partitions $\pi_1, \ldots, \pi_m$, where
    $\pi_i \in \Partof{D_i}$, such that the vectors $(\state^1_1,
    \ldots, \state^1_n)$, $(\state^2_1, \ldots, \state^2_n)$, and
    $(\state^3_1, \ldots, \state^3_n )$ are each in a distinct block
    of $\pi_1 \times \cdots \times \pi_m$.  In order to do so,
    consider the following two cases:

    \begin{enumerate}[(i)]
    \item $\state^1_p = \state^3_p$: Choose a partition $\pi_p \in
      \Partof{D_p}$ such that $\state^1_p \not\equiv_{\pi_p}
      \state^2_p$ (this exists by state-minimality). Furthermore,
      there must exist a component $r \ne p$ such that $\state^1_r \ne
      \state^3_r$. Then, simply pick $\pi_r \in \Partof{D_r}$ such
      that $\state^1_r \not\equiv_{\pi_r} \state^3_r$. All $\pi_i$
      for $i \ne p$ and $i \ne r$ can be chosen arbitrarily.  The
      cases $\state^1_q = \state^3_q$, $\state^2_p = \state^3_p$, and
      $\state^2_q = \state^3_q$ are analogous. (Notice that these
      cases are not mutually-exclusive.)
    \item $\state^3_p \ne \state^1_p$, $\state^3_p \ne \state^2_p$,
      $\state^3_q \ne \state^1_q$, and $\state^3_q \ne \state^2_q$:
      Choose a partition $\pi_p \in \Partof{D_p}$ such that
      $\state^1_p \not\equiv_{\pi_p} \state^3_p$. Now, it might be
      that either $\state^2_p \equiv_{\pi_p} \state^3_p$ or
      $\state^2_p \equiv_{\pi_p} \state^1_p$. In the first case,
      choose $\pi_q \in \Partof{D_q}$ such that $\state^2_q
      \not\equiv_{\pi_q} \state^3_q$, whereas in the second case
      choose $\pi_q \in \Partof{D_q}$ such that $\state^2_q
      \not\equiv_{\pi_q} \state^1_q$. All $\pi_i$ for $i \ne p$ and $i
      \ne q$ are chosen arbitrarily.
    \end{enumerate}
    Since $D'_j$ is binary, there are distinct $u,v \in \{1,2,3\}$
    such that
    \begin{displaymath}
      (\state'_1, \ldots, \state'_{j-1}, \tilde{\state}^u,
      \state'_{j+1}, \ldots, \state'_n) \equiv_{\alpha(\pi_1 \times
      \cdots \pi_m)} (\state'_1, \ldots, \state'_{j-1},
      \tilde{\state}^v, \state'_{j+1}, \ldots, \state'_n).
    \end{displaymath}
    However, we have $(\state^u_1, \ldots, \state^u_m)
    \not\equiv_{\pi_1 \times \cdots \times \pi_m} (\state^v_1, \ldots,
    \state^v_m)$, and $(\phi, \alpha)$ cannot be a reduction. 
  \end{proof}

  We now want to prove that the unique component which varies is
  independent of the other states.
  \begin{claim}
    \label{clm:clm2}
    For all~$j \in \{1, \ldots, n\}$ and for all
    $(\state_{1,1}', \ldots, \state_{1,j-1}', \state_{1,j+1}', \ldots,
    \state'_{1,n})$,\\ $ (\state_{2,1}', \ldots, \state_{2,j-1}',
    \state_{2,j+1}', \ldots, \state'_{2,n}) \in \Stateof{D'_1} \times
    \cdots \times \Stateof{D'_{j - 1}} \times \Stateof{D'_{j+1}}
    \times \cdots \times \Stateof{D'_n}$, we have
    \begin{equation*}
      \tau(j, \state'_{1,1}, \ldots, \state'_{1,j-1}, \state'_{1,j+1},
      \ldots, \state'_{1,n}) = \tau(j, \state'_{2,1},
      \ldots, \state'_{2,j-1}, \state'_{2,j+1}, \ldots, \state'_{2,n}).
    \end{equation*}
  \end{claim}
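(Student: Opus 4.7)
The plan is to prove Claim~\ref{clm:clm2} by first reducing to the case of a single-coordinate change in the fixed tuple, and then producing a contradiction from the injectivity of $\phi^{-1}$ using four instances of Claim~\ref{clm:clm1}. By chaining single-coordinate modifications, it suffices to show that $\tau(j,\cdot)$ is invariant whenever the two tuples agree everywhere except at one index $k\in\{1,\dots,n\}\setminus\{j\}$; the general case then follows by iterating along a path of coordinate changes.

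So I will fix such a $k$ and pick two distinct values $\tilde{s}^1,\tilde{s}^2\in\Stateof{D'_j}$ (available because a non-perfect, state-minimal, binary ASD has $|\Stateof{D'_j}|\ge 3$). I will work with the four vectors $\vec a,\vec b,\vec c,\vec d\in\Stateof{D'_1}\times\cdots\times\Stateof{D'_n}$ given by the four combinations of placing $\tilde{s}^1$ or $\tilde{s}^2$ at position $j$ and $\state'_{1,k}$ or $\state'_{2,k}$ at position $k$, with all other coordinates fixed as in the common part of the two tuples. Setting $\vec A:=\phi^{-1}(\vec a)$ and analogously $\vec B,\vec C,\vec D$, two applications of Claim~\ref{clm:clm1} to coordinate $j$ and two to coordinate $k$ produce indices $p,q,r,r'\in\{1,\dots,m\}$ such that $\vec A,\vec B$ differ only at position $p$, $\vec C,\vec D$ only at $q$, $\vec A,\vec C$ only at $r$, and $\vec B,\vec D$ only at $r'$; moreover $p=\tau(j,\cdot)$ for the first tuple and $q=\tau(j,\cdot)$ for the second, so the goal reduces to $p=q$.

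Assuming $p\ne q$ toward a contradiction, tracking $\vec A$ against $\vec D$ through the two intermediate vertices $\vec B$ and $\vec C$ restricts the positions of difference to $\{p,r'\}\cap\{r,q\}$, which must be non-empty because $\vec A\ne\vec D$ by injectivity of $\phi^{-1}$. Hence one of $p=r$, $r=r'$ or $r'=q$ must hold. When $p\ne r$ (which covers the cases $r=r'$ and $r'=q$), comparing values at position $p$ yields $A_p=C_p$ (from $p\ne r$), $C_p=D_p$ (from $p\ne q$), and $B_p=D_p$ (since $p\ne r'$, which holds in both of these sub-cases), so $A_p=B_p$, contradicting $A_p\ne B_p$. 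When $p=r$, the path $\vec B\to\vec A\to\vec C$ forces $\vec B$ and $\vec C$ to agree outside position $p$, so injectivity gives $B_p\ne C_p$; combined with $D_p=C_p$ (from $p\ne q$) and either $D_p=B_p$ when $p\ne r'$, or an analogous collision at position $q$ in the residual sub-case $p=r=r'$ (where $A_q=B_q=C_q$ yet $C_q\ne D_q$ while $B_q=D_q$), this produces the required contradiction. The main obstacle is organizing this case analysis cleanly; once done, only the injectivity of $\phi^{-1}$ is needed, with no further direct use of the reduction property beyond its implicit role in Claim~\ref{clm:clm1}.
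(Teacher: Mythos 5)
Your proof is correct and follows essentially the same route as the paper: reduce to a single-coordinate change in the fixed tuple, form the four-point ``rectangle'' of states, apply Claim~\ref{clm:clm1} to each side, and derive a contradiction from the injectivity of $\phi^{-1}$. The only difference is cosmetic --- the paper argues that the changed components along the path $w\to z\to y\to x$ must be $\{p_1,p_2\}$ and then exhibits a collision of states, whereas you compare individual components directly --- but the underlying idea and the use of Claim~\ref{clm:clm1} plus bijectivity are identical.
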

  \begin{proof}
    We start by assuming that there is a unique component $r \in
    \{1,\ldots,n\} - \{j\}$ such that $\state_{1,r}' \ne
    \state_{2,r}'$. Also, assume without loss of generality that $j
    \ne n$ and $r = n$. In particular, denote $\state'_{j'} :=
    \state_{1,j'} = \state_{2,j'}$ for all $j' \in \{1,\ldots,n-1\} -
    \{j\}$. Given two states $\state_{A,j}', \state_{B,j}' \in
    \Stateof{D'_j}$, we define the following states of $D'_1 \times
    \cdots \times D'_n$:
    \begin{displaymath}
      \begin{split}
        \state_{1,A}' & := (\state'_{1}, \ldots, \state'_{j-1},
        \state_{A,j}', \state'_{j+1},
        \ldots, \state'_{n-1}, \state'_{1,n}), \\
        \state_{1,B}' & := (\state'_{1}, \ldots, \state'_{j-1},
        \state_{B,j}', \state'_{j+1},
        \ldots, \state'_{n -1}, \state'_{1,n}), \\
        \state_{2,A}' & := (\state'_{1}, \ldots, \state'_{j-1},
        \state_{A,j}', \state'_{j+1},
        \ldots, \state'_{n - 1}, \state'_{2,n}), \\
        \state_{2,B}' & := (\state'_{1}, \ldots, \state'_{j-1},
        \state_{B,j}', \state'_{j+1}, \ldots, \state'_{n-1},
        \state'_{2,n}).
      \end{split}
    \end{displaymath}
    Furthermore, using the previous claim, we define
    \begin{displaymath}
      \begin{split}
        \{p_1 \}  & := \tau(j, \state'_{1}, \ldots,
        \state'_{j-1}, \state'_{j+1}, \ldots, \state'_{n-1},
        \state'_{1,n}) \\
        \{p_2 \} & := \tau(j, \state'_{1}, \ldots, \state'_{j-1},
        \state'_{j+1}, \ldots, \state'_{n-1}, \state'_{2,n}),
      \end{split}
    \end{displaymath}
    as well as
    \begin{displaymath}
      \begin{split}
        \{r_A\} & := \tau(n, \state'_1, \ldots, \state'_{j-1},
        \state_{A,j}, \state'_{j+1}, \ldots, \state'_{n-1}) \\
        \{r_B\} & := \tau(n, \state'_1, \ldots, \state'_{j-1},
        \state_{B,j}, \state'_{j+1}, \ldots, \state'_{n-1}).
      \end{split}
    \end{displaymath}
    Finally, we define the following states of $D_1 \times \cdots
    \times D_m$ making use of $\phi^{-1}$:
    \begin{displaymath}
      w := \phi^{-1}(\state_{1,A}'), x := \phi^{-1}(\state_{1,B}'), y := \phi^{-1}(\state_{2,B}'), z := \phi^{-1}(\state_{2,A}').
    \end{displaymath}
    We have $w_{p_1} \ne x_{p_1}$ and $x_p = w_p$ for all $p \in
    \{1,\ldots, m\} - \{p_1\}$. Analogously $y_{p_2} \ne z_{p_2}$ and
    $y_p = z_p$ for all $p \in \{1,\ldots, m\} - \{p_2\}$.  And again,
    by the same argument, $x$ and $y$ differs only at component $r_A$,
    and $z$ and $w$ differ only in component $r_B$.  According to
    this, there are two ways to modify state $w$ into $x$. The first
    one is by changing component $p_1$. The second one is by going
    through states $z$ and $y$, modifying components $r_A, p_2,$ and
    $r_B$. Assume, towards a contradiction, that $p_1 \ne p_2$. Since
    $w_{p_2} = x_{p_2}$, we must have either $r_A = p_2$ and $r_B =
    p_1$ or $r_A = p_1$ and $r_B = p_2$. If the former holds, we
    necessarily have $w = y$, while if the latter holds, then $z =
    x$. In both cases, we have a contradiction with the fact that
    $\phi$ is a bijection.
    
    The proof of the claim easily follows by repeating the same argument
    iteratively for $r \ne n$. 
  \end{proof}

  Hence, for $j \in \{1, \ldots, n\}$ we are now allowed to denote by
  $\tau(j)$ the unique component which varies when altering the state
  of $D'_j$.  Additionally, for $i \in \{1,\ldots, m\}$, we define
  \begin{displaymath}
    \rho(i) := \left\{ j \,|\, \tau(j) = i \right\}.
  \end{displaymath}
  Note that $\{\rho(i)\,|\, i =1,\ldots,m\} = \ker{\tau}$ is a set
  partition of $\{1,\ldots, n\}$. Now, take a fix $i \in \{1,\ldots,m\}$
  and let $\rho(i) = \{j_1, \ldots, j_r\}$.  Furthermore, fix states
  $\state_1, \ldots, \state_{i-1}, \state_{i+1},\ldots,\state_m$ for the
  devices $D_1, \ldots, D_{i-1}, D_{i+1},\ldots,D_m$. The $j$-th component
  $\phi_j(\state_1,\ldots,
  \state_{i-1},\state,\state_{i+1},\ldots,\state_n)$ is constant for all $j
  \notin \rho(i)$, since for any two states $\state', \state'' \in
  \Stateof{D'_1 \times \cdots \times D'_n}$ differing at two distinct
  components $p \ne q$ such that $\tau(p) \ne \tau(q)$, the states
  $\phi^{-1}(\state), \phi^{-1}(\state')$ differ at both components
  $\tau(p)$ and $\tau(q)$.

  Furthermore, fix arbitrary partitions~$\pi_1, \ldots, \pi_{i-1},
  \pi_{i+1}, \ldots, \pi_{n}$ for the devices $D_1$, $\ldots$, $D_{i-1}$,
  $D_{i+1}$, $\ldots$, $D_n$, and finally define
  {\small \begin{displaymath}
    \begin{split}
      \phi^i(\state) & := \left(\phi_{j_1}(\state_1,\ldots,
      \state_{i-1},\state,\state_{i+1},\ldots,\state_n), \ldots,
      \phi_{j_r}(\state_1,\ldots,
      \state_{i-1},\state,\state_{i+1},\ldots,\state_n) \right) \\
     \alpha^i(\pi) & := (\alpha_{j_1}(\pi_{1},\ldots,
      \pi_{i-1},\pi,\pi_{i+1},\ldots,\pi_n), \ldots, \alpha_{j_r}(\pi_1,\ldots,
      \pi_{i-1},\pi,\pi_{i+1},\ldots,\pi_n)).
    \end{split}
  \end{displaymath}}%
  It is now easy to verify that $(\phi^i, \alpha^i)$ is a reduction of
  $D_i$ to $\bigtimes_{j \in \rho(i)} D'_j$. If this is not the case,
  there are distinct states $\state_i, \state_i' \in D_i$ and $\pi_i
  \in \Partof{D}$ such that $\state_i \not\equiv_{\pi_i} \state_i'$,
  but $\state_i =_{\alpha^i(\pi_i) \circ \phi^i} \state'_i$. By the
  arguments above, and by the definition of $\phi^i$ and $\alpha^i$,
  this implies that $(\phi, \alpha)$ is not a reduction.
\end{document}